\newtheorem{theorem}{Theorem}
\newtheorem{lemma}[theorem]{Lemma}
\theoremstyle{definition}
\newtheorem{defn}{Definition}
\newtheorem*{thm1}{Theorem 1}
\newcommand{\abs}[1]{\left\lvert #1 \right\rvert}
\newcommand{\complex}{{\mathds C}}
\newcommand{\reals}{{\mathds R}}
\newcommand{\ints}{{\mathds Z}}
\newcommand{\rats}{{\mathds Q}}
\def\poly{\operatorname{poly}}
\newcommand{\tensor}{\otimes}
\newcommand{\adjoint}{\dagger}
\newcommand{\ket}[1]{|#1\rangle}
\newcommand{\bra}[1]{\langle #1|}
\newcommand{\ketbra}[2]{\ket{#1}\!\bra{#2}}        
\newcommand{\set}[1]{{\left\{#1\right\}}}          
\newcommand{\norm}[1]{\left\|#1\right\|}           
\newcommand{\enorm}[1]{\norm{#1}_{\mathrm{2}}}      
\newcommand{\onorm}[1]{\norm{#1}_{\mathrm{1}}}     
\newcommand{\trace}{{\rm Tr}}
\newcommand{\transpose}{{\mathrm T}}
\newcommand{\ve}[1]{\mathbf{#1}}
\newcommand{\enc}[1]{\left<#1\right>}
\newcommand{\ec}{\enorm{\hat{\ve{c}}}}
\newcommand{\WMEM}{\operatorname{WMEM}_\beta (\SMN)}
\newcommand{\WMEMEB}{\operatorname{WMEM}_{\beta^\prime} (\EB)}
\newcommand{\WMEMK}{\operatorname{WMEM}_\beta (K)}
\newcommand{\WMEMKp}{\operatorname{WMEM}_\beta(K)}
\newcommand{\WSEPp}{\operatorname{WSEP}_\nu(K)}
\newcommand{\WOPT}{\operatorname{WOPT}_\epsilon (\SMN)}
\newcommand{\WOPTK}{\operatorname{WOPT}_\epsilon (K)}
\newcommand{\WOPTKp}{\operatorname{WOPT}_\epsilon(K)}
\newcommand{\WVAL}{\operatorname{WVAL}_\alpha (\SMN)}
\newcommand{\Q}{\operatorname{QUSEP}}
\newcommand{\C}{\operatorname{CLIQUE}}
\newcommand{\R}{\operatorname{RSDF}}
\newcommand{\EBP}{\operatorname{EBP}}
\newcommand{\SMN}{\mathcal{S}_{M,N}}
\newcommand{\HMN}{H_{M,N}}
\newcommand{\func}[1]{#1}
\newcommand{\dens}{\mathcal{D}(\complex^M\otimes\complex^N)}
\newcommand{\denstwo}{\mathcal{D}(\complex^2\otimes\complex^{M}\otimes\complex^N)}
\newcommand{\spa}[1]{\mathcal{#1}}
\newcommand{\EB} {\mathcal{G}_{M,N}}
\newcommand{\EBtwo} {\mathcal{G}_{2M,N}}
\newcommand{\D}{\mathcal{F}(\complex^M\otimes\complex^N)}
\begin{document}
\thispagestyle{empty}

\title{Strong NP-Hardness of the Quantum Separability Problem}

\author{Sevag Gharibian \footnote{School of Computer Science
and Institute for Quantum Computing, University of Waterloo, Waterloo, Canada}}
\maketitle
\thispagestyle{empty}

\begin{abstract}
Given the density matrix $\rho$ of a bipartite quantum state, the quantum separability problem asks whether $\rho$ is entangled or separable. In 2003, Gurvits showed that this problem is NP-hard if $\rho$ is located within an inverse exponential (with respect to dimension) distance from the border of the set of separable quantum states. In this paper, we extend this NP-hardness to an inverse polynomial distance from the separable set. The result follows from a simple combination of works by Gurvits, Ioannou, and Liu. We apply our result to show (1) an immediate lower bound on the maximum distance between a bound entangled state and the separable set (assuming $\rm{P}\neq\rm{ NP}$), and (2) NP-hardness for the problem of determining whether a completely positive trace-preserving linear map is entanglement-breaking.
\end{abstract}

\section{Introduction}

Once dubbed ``the characteristic trait of quantum mechanics''~\cite{S35}, the phenomenon of quantum entanglement has been (theoretically) harnessed in a multitude of quantum computational tasks and areas, including quantum teleportation~\cite{BBCJPW93}, superdense coding~\cite{BW92}, quantum parallelism (e.g.\ as in Shor's factoring algorithm~\cite{S94}), quantum communication complexity~\cite{CB97,G97}, and quantum cryptography~\cite{E91}. In response, there have been a number of proposed entanglement detection criteria and measures, such as the positive partial transpose (PPT) criterion~\cite{P96,HHH96}, von Neumann entropy of reduced states~\cite{DHR02}, entanglement of formation~\cite{BDSW96}, relative entropy of entanglement~\cite{VPRK97}, entanglement cost~\cite{BDSW96}, distillable entanglement~\cite{BDSW96}, negativity~\cite{VW02}, and the first need for positive but not completely positive maps in physics~\cite{HHH96}, to name but a few (see~\cite{Bru02,HHH07,BZ06} for surveys).
Yet, the problem of determining whether an arbitrary quantum state is entangled or not (where in the latter case, the state is called \emph{separable}), dubbed the Quantum Separability problem ($\Q$), was proven NP-hard by Gurvits in 2003~\cite{Gur03}.

Let us discuss how one formulates $\Q$ in a slightly more formal manner. Let $\dens$ denote the set of bipartite density operators acting on $\complex^M\otimes\complex^N$, where $M$ and $N$ denote the dimensions of the respective subsystems. A quantum state $\rho\in\dens$ is called \emph{separable} if and only if it can be written
\begin{equation}
    \rho = \sum_k p_k \ketbra{a_k}{a_k}\otimes\ketbra{b_k}{b_k},
\end{equation}
for unit vectors $\ket{a_k}\in\complex^M$ and $\ket{b_k}\in\complex^N$, and real vector $\ve{p}$ such that $p_k\geq 0$ for all $k$ and $\sum_k p_k=1$. The latter constraint implies that the set of separable density operators, which we denote by $\SMN$, is a convex set, being the convex hull of all pure product states $\ket{a}\otimes\ket{b}\in\complex^M\otimes\complex^N$. Intuitively, $\Q$ is thus the problem of determining whether a given state $\rho\in\dens$ is in $\SMN$. However, a problem arises when we encode our state $\rho$ using a computer --- due to the constraint of finite precision, we cannot in general encode the density matrix of $\rho$ \emph{exactly}. In particular, if $\rho$ sits on the border of $\SMN$, it may be that the slightly perturbed density matrix we actually encode now sits slightly outside of $\SMN$. This makes the problem ill-defined.

To circumvent this problem, one solution is to allow a margin of error in the vicinity of the border of $\SMN$. This formulation is known as \emph{Weak} Membership. Roughly, the Weak Membership problem over a convex set $K\subseteq\reals^m$ (denoted $\WMEMK$, and defined formally in Section~\ref{scn:def}) asks to decide whether a given point $\ve{y}\in\reals^m$ is in $K$, with the proviso that an algorithm is allowed to err on points lying within some fixed Euclidean distance $\beta > 0 $ from the border of $K$. Thus, to make $\Q$ well-defined, we consider the formulation $\WMEM$. Observe that $\SMN\not\subseteq\reals^m$ as required for $\WMEMK$ --- we deal with this explicitly in Section~\ref{scn:def} by more correctly letting $\SMN$ denote the set of real Bloch vectors~\cite{Kim03} corresponding to the elements of $\dens$.

In 2003, Gurvits showed~\cite{Gur03} that $\WMEM$ is NP-hard via a polynomial time Turing reduction from the NP-complete problem $\rm{PARTITION}$. Intuitively, a Turing reduction describes how to solve a problem $A$ (e.g.\ $\rm{PARTITION}$) by running an algorithm for a second problem $B$ (e.g.\ $\WMEM$) possibly multiple times. $\rm{PARTITION}$ is defined as the problem of deciding whether a finite set of integers can be partitioned into two sets of equal sum. However, $\rm{PARTITION}$ is known to be NP-hard only if the magnitudes of the input integers are exponentially large with respect to input length --- otherwise, the problem can be solved efficiently using a dynamic programming approach~\cite{GJ79}. It follows, as observed by Aaronson and later documented by Ioannou~\cite{Iou07}, that the reduction of Ref.~\cite{Gur03} shows NP-hardness for $\WMEM$ only when $\beta\leq1/\operatorname{exp}(M,N)$, i.e.\ when the input state is allowed to be exponentially close to the border of $\SMN$.

In an attempt to strengthen this result, Gurvits then devised (as explained in~\cite{Iou07}; see also~\cite{GO09}) the following reduction from the NP-complete problem $\C$ (defined in Section~\ref{scn:def}):

\begin{equation}\label{eqn:failedchain}
    \C\leq_m\R\leq_m\WVAL\leq_T \WMEM.
\end{equation}
Here, $\R$ is the problem Robust Semidefinite Feasibility (defined in Section~\ref{scn:def}), $\WVAL$ is the problem Weak Validity~\cite{Gro88} (which intuitively asks one to decide whether a given hyperplane is a separating hyperplane for a given convex set modulo some error $\alpha>0$, and whose precise definition is not needed here), $\leq_T$ denotes a Turing reduction, and $\leq_m$ denotes a \emph{many-one} reduction. A \emph{many-one} reduction is a special case of a Turing reduction in which the algorithm for problem B is invoked only \emph{once}, the output of which is immediately returned as the output for problem A. Unfortunately, the link $\WVAL\leq_T \WMEM$ above is based on the Yudin-Nemirovskii theorem~\cite{YN76}, which uses the shallow-cut ellipsoid method, and also results in exponential scaling for $\beta$~\cite{Iou07}. Thus, this reduction again shows NP-hardness of $\WMEM$ only for $\beta\leq 1/\exp(M,N)$.

The main result we show in this article is as follows.

\begin{theorem}\label{thm:strong}
    $\WMEM$ is NP-hard for $\beta\leq1/\poly(M,N)$, i.e.\ is strongly NP-hard.
\end{theorem}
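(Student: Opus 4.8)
The plan is to keep the ``cheap'' part of Gurvits's chain in \eqref{eqn:failedchain} and to excise only the single precision-destroying link. The reductions $\C \leq_m \R \leq_m \WVAL$ established by Gurvits and documented by Ioannou never invoke the ellipsoid method; they are honest many-one reductions carrying a validity parameter $\alpha \geq 1/\poly(M,N)$ through unharmed. The sole source of the exponential blow-up is the final link $\WVAL \leq_T \WMEM$, realized via the Yudin--Nemirovskii shallow-cut machinery. My goal is therefore to replace this link by a reduction that solves weak validity (equivalently, weak optimization) over $\SMN$ using a weak membership oracle while losing only a polynomial factor in precision.

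First I would record the geometric facts that make $\SMN$ unusually well-behaved among convex bodies. The maximally mixed state $I/(MN)$ is separable, and a theorem of Gurvits--Barnum guarantees an explicitly known inscribed ball $B(I/(MN),r) \subseteq \SMN$ with $r \geq 1/\poly(M,N)$; trivially $\SMN$ sits inside an outer ball $B(I/(MN),R)$ with $R \leq 1$. Thus $\SMN$ comes equipped with a known interior point together with inner and outer radii whose ratio $R/r$ is polynomially bounded.

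Next I would invoke the refinement of the membership-to-optimization reduction due to Liu: for a convex body presented with an interior point and radii $r,R$ as above, one can solve $\WOPTK$ using an oracle for $\WMEMK$ in oracle-polynomial time, with the membership precision $\beta$ polynomially related to the optimization precision $\epsilon$, the dimension $MN$, and $R/r$. The conceptual point is that the generic Yudin--Nemirovskii analysis is wasteful precisely because it ignores an explicit inscribed ball; once the polynomial roundedness of $\SMN$ is fed in, the shallow cuts need only be generated to inverse-polynomial accuracy, so $\beta$ stays at $1/\poly(M,N)$.

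Finally I would assemble the pieces. Since weak validity reduces trivially to weak optimization over the same body -- one computes the support value $\max_{\ve x \in \SMN}\ve c^\transpose \ve x$ and compares it to the threshold $\gamma$ -- Liu's reduction yields $\WVAL \leq_T \WMEM$ with $\alpha,\beta \geq 1/\poly(M,N)$. Composing with the Gurvits--Ioannou chain gives $\C \leq_T \WMEM$ at inverse-polynomial precision, and the NP-hardness of $\C$ then forces $\WMEM$ to be NP-hard for $\beta \leq 1/\poly(M,N)$, the claimed strong NP-hardness. I expect the one genuine obstacle to be the third step: verifying that Liu's precision bookkeeping, instantiated with the Gurvits--Barnum radius for $\SMN$, really does keep $\beta$ inverse-polynomial rather than merely inverse-exponential; everything else is either known or routine.
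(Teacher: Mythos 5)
Your proposal is sound in substance and shares the paper's skeleton exactly: keep the Gurvits--Ioannou front end $\C\leq_m\R$, discard only the Yudin--Nemirovskii link, and finish with Liu's non-ellipsoidal reduction (Theorem~\ref{thm:link4}) instantiated with the polynomial inner/outer radii of $\SMN$. The difference is in the middle, and the step you dismiss as trivial is precisely where the paper's technical work sits. The paper never routes through $\WVAL$ at all: its Lemma~\ref{l:link2} is a direct many-one reduction $\R\leq_m\WOPT$, whose crux is that a YES instance of $\WOPT$ demands a witness in the \emph{shrunk} set $S(\SMN,-\epsilon)$, while an optimizer of $\ve{c}^\transpose\ve{r}$ over $\SMN$ sits on the boundary; the witness is manufactured via the Gr\"otschel--Lov\'asz--Schrijver fact that every point of $K$ lies within $2\epsilon R/r=2(MN-1)\epsilon$ of $S(K,-\epsilon)$, and this is what forces $\epsilon\leq(\sqrt{\zeta+\eta}-\sqrt{\zeta-\eta})/(4\ec(MN-1)+1)$, still inverse polynomial because the RSDF gap is $\Omega(1/N)$ and $\ec\in O(\sqrt{N})$. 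Your route instead black-boxes $\R\leq_m\WVAL$ from Ref.~\cite{Iou07} and converts $\WVAL$ to $\WOPT$ generically. Such a conversion does exist and is cheap, but not by the mechanism you state: neither promise-decision oracle lets you ``compute the support value $\max_{\ve{x}\in\SMN}\ve{c}^\transpose\ve{x}$.'' The correct one-call argument is contrapositive on the promise (using the GLS definition of weak validity): if the $\WOPT$ oracle answers YES, its NO-condition must fail, so some $\ve{x}\in S(\SMN,\epsilon)$ has $\ve{c}^\transpose\ve{x}>\gamma-\epsilon$, which is exactly a violation certificate for $\WVAL$; if it answers NO, its YES-condition must fail, so every $\ve{x}\in S(\SMN,-\epsilon)$ has $\ve{c}^\transpose\ve{x}<\gamma+\epsilon$, a validity certificate; on top of this one needs a rescaling step, since Definition~\ref{def:WOPT} requires $\enorm{\ve{c}}=1$ while a single parameter $\epsilon$ controls both the threshold shift and the set perturbation.

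As for what each route buys: yours avoids re-proving the geometric argument, but only by pushing it into the black box --- for the composed chain to be sound, Ioannou's reduction must already place its YES witness inside a shrunk set (it does; the paper notes that Lemma~\ref{l:link2} ``uses ideas similar to'' that reduction), and since the paper never states the precise parameters of Ioannou's $\WVAL$ instance, making your version rigorous means unpacking Ref.~\cite{Iou07} anyway, which is essentially what Lemma~\ref{l:link2} does; the paper's self-contained route is also what yields the explicit exponents of Theorem~\ref{thm:finalthm}. Two smaller corrections: the ``one genuine obstacle'' is not Liu's bookkeeping --- his bound $\beta=r^3\epsilon^3/[2^{13} 3^3 m^5R^4(R+r)]$ is manifestly inverse polynomial once $\epsilon$, $r$, $1/R$ are --- but rather the middle step above, where the dimension-dependent precision loss actually enters. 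And the Yudin--Nemirovskii reduction does not fail ``because it ignores the inscribed ball'': the GLS analysis uses $r$ and $R$ too; the exponential loss stems from error propagation in the shallow-cut ellipsoid method, which is exactly what Liu's iterative, non-ellipsoidal scheme avoids.
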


A problem is called \emph{strongly NP-hard} if it is NP-hard even if the magnitudes of its numerical parameters are polynomially bounded in the length of its input~\cite{GJ79}. To show Theorem~\ref{thm:strong}, our observation is that we can replace the component $\WVAL\leq_T \WMEM$ in Eq.~(\ref{eqn:failedchain}) with a recent non-ellipsoidal Turing reduction of Liu~\cite{Liu07} (see also Ref.~\cite{Ber04}, as discussed in Ref.~\cite{Liu07}\footnote{Both Ref.~\cite{Liu07} and Ref.~\cite{Ber04} only pertain to the third reduction in Eq.~\eqref{eqn:reductionchain} --- i.e.\ they discuss the reduction $\WOPTK \leq_T \WMEMK$ for arbitrary $K$.}~) from the problem Weak Optimization (denoted $\WOPTK$, and defined in Section~\ref{scn:def}) to $\WMEMK$. We thus use the new reduction chain:
\begin{equation}
    \C \leq_m \R \leq_m \WOPT \leq_T \WMEM. \label{eqn:reductionchain}
\end{equation}
To make this work, our technical contribution is the reduction $\R \leq_m \WOPT$, which uses ideas similar to those in the reduction $\R\leq_m\WVAL$~\cite{Iou07}.

This article is organized as follows. In Section~\ref{scn:def}, we introduce all necessary definitions and notation. Section~\ref{scn:reduction} presents the proof of Theorem~\ref{thm:strong}. In Section~\ref{scn:application}, we discuss two applications of Theorem~\ref{thm:strong}. We first apply the positive partial transpose (PPT) entanglement detection criterion~\cite{P96,HHH96} to obtain immediate lower bounds on the maximum Euclidean distance between a bound entangled~\cite{HHH98} state and $\SMN$ (assuming $\rm{P}\neq\rm{ NP}$). We next use the Jamio{\l}kowski isomorphism~\cite{J72} to show NP-hardness of the problem of determining whether a completely positive trace-preserving linear map (i.e.\ a quantum \emph{channel}) is entanglement-breaking~\cite{HSR03}. We conclude in Section~\ref{scn:conclusion}.

\section{Definitions and Notation}\label{scn:def}
In this section, we formally define the following problems needed to show Theorem~\ref{thm:strong}: $\C$, Robust Semidefinite Feasibility ($\R$), Weak Optimization ($\WOPTK$), and Weak Membership ($\WMEMK$). All norms are taken as the Euclidean norm  $\enorm{~}$ (where $\enorm{A}$ corresponds to the Frobenius norm if $A$ is a matrix). The letter $\rats$ indicates the rational numbers. The notation $:=$ is used to indicate a definition. We denote (column) vector $v$ by $\ve{v}$, its conjugate transpose as $\ve{v}^\adjoint$, and its $i$th entry as $v_i$. We use the notation $\enc{\alpha}$ to signify the number of bits necessary to encode an entity $\alpha$. Specifically, if $\alpha=a/b$ is rational, we define $\enc{\alpha}=\enc{a}+\enc{b}$, and for matrix $A$, we let $\enc{A}=\sum_{ij}\enc{{A}_{ij}}$ (similarly for vectors).

First, the NP-complete problem $\C$ is stated as follows.

\begin{defn}[$\C$]\label{def:CLIQUE} Given a simple graph $G$ on $n$ vertices, and $c\leq n$, for $n,c\in \ints^+$, decide, with respect to the complexity measure $\enc{G}+\enc{c}$:
    \begin{tabbing}
        \quad\quad\= If the number of vertices in the largest complete subgraph of $G$ is at least $c$, output YES.\+\\
        Otherwise, output NO.
    \end{tabbing}
\end{defn}

\noindent Here, we take $\enc{G}=\enc{A_G}$, where $A_G$ is the $n\times n$ adjacency matrix for $G$, such that $A_G[i,j]=1$ if vertices $i$ and $j$ are connected by an edge, and $A_G[i,j]=0$ otherwise. Next, we have the problem Robust Semidefinite Feasibility.

\begin{defn}[Robust Semidefinite Feasibility ($\R$)]\label{def:RSDF} Given $k$ rational, symmetric $l\times l$ matrices $B_1,\ldots, B_k$, and $\zeta,\eta\in\rats$, with $\zeta,\eta\geq 0$, define $g(B_1,\ldots,B_k):=\max_{\ve{x}\in\reals^l,\enorm{\ve{x}}=1}\sum_{i=1}^{k}(\ve{x}^\transpose B_i \ve{x})^2$. Then, decide, with respect to the complexity measure $lk+\sum_{i=1}^{k}\enc{B_i}+\enc{\zeta}+\enc{\eta}$:
\begin{tabbing}
        \quad\quad\= If $g(B_1,\ldots,B_k)\geq \zeta+\eta$, output YES.\+\\
        If $g(B_1,\ldots,B_k)\leq \zeta-\eta$, output NO.
        \end{tabbing}
\end{defn}

We have assumed $\zeta \geq 0$ without loss of generality above, since $g(B_1,\ldots,B_k)\geq 0$. This will be necessary later in Lemma~\ref{l:link2}, when we need to take $\sqrt{g(B_1,\ldots,B_k)}$. We have also defined $\R$ as a \emph{promise} problem, meaning we are promised the input will fall into one of two disjoint cases which may be separated by a non-zero gap, and we must distinguish between the two cases. One could equivalently lift the promise and allow input falling in the ``gap'' region (i.e.\ $\zeta-\eta<g(B_1,\ldots,B_k)< \zeta+\eta$) --- for any such input, we would consider any output to be correct (i.e.\ YES or NO).

\begin{figure}\centering
  \includegraphics[height=40mm]{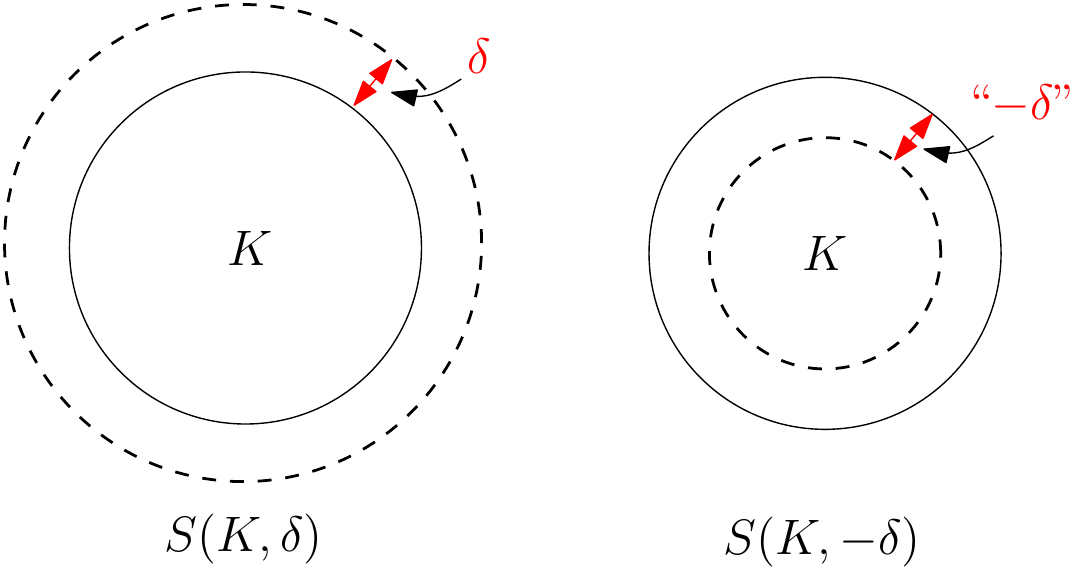}
  \caption{An intuitive picture of the sets $S(K,\delta)$ and $S(K,-\delta)$, respectively. We use the term ``$-\delta$'' in the latter illustration to stress the fact that $K$ is contracted.}\label{fig:skd}
\end{figure}

Moving on, in order to discuss the Weak Optimization and Weak Membership problems, we first require the following definitions. Let $K\subseteq \reals^m$ be a convex and compact set, and define the sets $S(K,\delta) := \{\ve{x}\in \reals^m \mid \exists~\ve{y} \in K \text{ s.t. } \enorm{\ve{x}-\ve{y}}\leq \delta\}$, and $S(K,-\delta):= \{\ve{x}\in K \mid S(\ve{x},\delta) \subseteq K \}$. Roughly, $S(K,\delta)$ can be thought of as extending the border of $K$ by $\delta$ in Euclidean distance, and $S(K,-\delta)$ can be thought of as taking the core of $K$, which is $\delta$ away from the border of $K$. The two sets are depicted in Figure~\ref{fig:skd}. As per Ref.~\cite{Gro88}, we shall require that $K$ be a \emph{well-bounded p-centered} set, meaning that $K\subseteq S(\ve{0},R)$ for $\ve{0}\in\reals^m$ the origin and some rational $R>0$, and $S(\ve{p},r)\subseteq K$ for a known point $\ve{p}\in K$ and some rational $r>0$. This ensures $K$ is full-dimensional and bounded.

Finally, we set the encoding size of $K$ as $\enc{K}=m+\enc{r}+\enc{R}+\enc{\ve{p}}$~\cite{Gur03}. The presence of parameter $m$ in this expression allows us to perform vector addition and scalar multiplication on elements of $K$ in polynomial time. We can now formally define Weak Optimization and Weak Membership over the set $K$ as follows.

\begin{defn}[Weak Optimization ($\WOPTK$)\cite{Liu07}]\label{def:WOPT}Let $K\subseteq \reals^m$ be a convex, compact, and well-bounded p-centered set. Then, given $\ve{c}\in \rats^m$ with $\enorm{\ve{c}}=1$, and $\gamma,\epsilon\in \rats$, such that $\epsilon>0$, decide, with respect to the complexity measure $\enc{K}+\enc{\ve{c}}+ \enc{\gamma}+\enc{\epsilon}$:
        \begin{tabbing}
        \quad\quad\=
        If there exists $\ve{y}\in S(K,-\epsilon)$ with $\ve{c}^\transpose \ve{y}\geq \gamma + \epsilon$, then output YES.\+\\
        If for all $\ve{x}\in S(K,\epsilon)$, $\ve{c}^\transpose \ve{x}\leq \gamma - \epsilon$, then output NO.
        \end{tabbing}
\end{defn}

\begin{defn}[Weak Membership ($\WMEMK$)]\label{def:WMEM}Let $K\subseteq \reals^m$ be a convex, compact, and well-bounded p-centered set. Then, given $\ve{y}\in\rats ^m$, and error parameter $\beta\in\rats$, such that $\beta>0$, decide, with respect to the complexity measure $\enc{K}+ \enc{\ve{y}}+\enc{\beta}$:
        \begin{tabbing}
        \quad\quad\=
        If $\ve{y}\in S(K,-\beta)$, then output YES.\+\\
        If $\ve{y}\not \in S(K,\beta)$, then output NO.
        \end{tabbing}
\end{defn}

Both of these problems are also stated as promise problems. Roughly speaking, the former asks whether there exists a point in the ``core'' of $K$ that achieves a threshold value slightly higher than $\gamma$ for a linear function defined by $\ve{c}$. The latter asks to distinguish whether a given point $\ve{y}$ is in the ``core'' of $K$ or ``far away'' from $K$. We remark that unlike here, in Ref.~\cite{Liu07} the inputs to $\WOPTK$ and $\WMEMK$ are real (as opposed to rational), and specified using $\poly(m)$ bits of precision. The latter is because such precision suffices if one demands $\epsilon$ and $\beta$ to be at least inverse polynomial in the input size~\cite{Liu07}, i.e.\ if one allows at least ``moderate'' error (which we will also demand here). It is easy to see that we can exactly represent any such $\poly(m)$-bit real numbers as rational numbers in poly-time using $\poly(m)$ bits as well, and hence the case of Ref.~\cite{Liu07} is a special case of our definitions here.

With the definitions above in place, our goal in Section~\ref{scn:reduction} is to show the reduction of Eq.~(\ref{eqn:reductionchain}), i.e.\ that an instance of $\C$ can be solved in polynomial time given an algorithm for $\WMEM$. Before proceeding, it remains for us to verify that the set of separable quantum states in $\dens$ satisfies our requirements on $K\subseteq\reals^m$ in the definitions of $\WOPTK$ and $\WMEMK$. To do so, we first represent our quantum states as real vectors in $\reals^m$, for $m$ set as follows. Let $\HMN$ denote the set of Hermitian operators mapping $\complex^M\tensor\complex^N\mapsto\complex^M\tensor\complex^N$. Then, via the isomorphism between $\HMN$ and $\reals^{M^2N^2}$, we can write any $\rho\in\dens$ as (e.g.\ \cite{Kim03}):
    \begin{equation}\label{eqn:densityToBloch}
        \rho=\frac{I}{MN}+\frac{1}{2}\sum_{i=1}^{M^2N^2-1}r_i\sigma_i,
    \end{equation}
where we have chosen as a basis for $\HMN$ the identity and the traceless Hermitian generators of $SU(MN)$, the latter denoted by $\sigma_i$ and such that $\trace(\sigma_i\sigma_j)=2\delta_{ij}$ (for $\delta_{ij}$ the Kronecker delta)~\cite{Kim03}. The vector $\ve{r}\in\reals^{M^2N^2-1}$ is called the \emph{Bloch vector} of $\rho$, whose terms are given by $r_i=\trace(\rho\sigma_i)$. Henceforth when referring to $\SMN$, we shall mean the convex set of Bloch vectors in $\reals^{M^2N^2-1}$ corresponding to separable states in $\dens$. Having represented $\SMN$ in terms of real vectors, we require it to be p-centered and well-bounded. This follows since $\SMN$ is contained in an origin-centered ball of radius $R=\sqrt{2(MN-1)/MN}$~\cite{BZ06}, and contains an origin-centered ball of radius $r=\sqrt{2/MN(MN-1)}$~\cite{Gur02}, where the origin $\ve{0}\in\reals^{M^2N^2-1}$ corresponds to the maximally mixed state, and $R$ and $r$ are with respect to the Euclidean norm. We remark that the extra factor of $\sqrt{2}$ appearing in the expressions for $r$ and $R$ is due to switching from the density matrix to the Bloch vector representation (see Lemma~\ref{l:appendix2} in Appendix~\ref{app:A}) --- this scaling will not affect our analysis. We further require that $\SMN$ be compact, which holds since the set of pure product states is closed and bounded, and the convex hull of a convex compact set is also compact~\cite{V89}.

Finally, for concreteness, we remark that from the values of $m$, $r$, $R$, and $\ve{p}$ above for $\SMN$, it follows that for the definitions of $\WOPT$ and $\WMEM$ (Def.~\ref{def:WOPT} and~\ref{def:WMEM}, respectively), $\enc{\SMN}=m+\enc{R}+\enc{r}+\enc{\ve{p}}\leq\poly(MN)$.

\section{The Reduction}\label{scn:reduction}
We now show Theorem~\ref{thm:strong} by demonstrating the polynomial-time Turing reduction of Eq.~(\ref{eqn:reductionchain}). Since $\C$ is NP-complete, this will imply NP-hardness of $\WMEM$. In addition, we must show that in Eq.~(\ref{eqn:reductionchain}), one can choose $\epsilon,\beta\geq 1/\poly(M,N)$ --- the former is required for the reduction to run in polynomial time (due to the run-time of Theorem~\ref{thm:link4}), and the latter is required to prove \emph{strong} NP-hardness (since the case of $\beta\leq 1/\rm{exp}(M,N)$ is already known to be NP-hard~\cite{Gur03}). We proceed by considering each link of Eq.~(\ref{eqn:reductionchain}) in order. As will be discussed, the first and last links are provided by Ref.~\cite{Iou07} and Ref.~\cite{Liu07}, respectively. Our job is to show the second link.

The first link in Eq.~(\ref{eqn:reductionchain}) is given to us by the following theorem~\cite{Iou07}. Unless otherwise stated, by a poly-time reduction, we mean with respect to the encoding size of the problem instance, as defined in Section~\ref{scn:def}. We use the notation $\Pi=(\text{\emph{input parameters}})$ to denote an instance $\Pi$ of a given problem, with $\Pi$ specified by the given input parameters.

\begin{theorem}[Gurvits and Ioannou~\cite{Iou07}]\label{thm:link1}
     There exists a poly-time many-one reduction which maps instance $\Pi_1=(G,n,c)$ of $\C$ to instance $\Pi_2=(k,l,B_1,\ldots,B_k,\zeta,\eta)$ of $\R$, such that $k=n(n-1)/2$, $l=n$, $B_i\in\rats^{n\times n}$ and $\enorm{B_i}\in \Theta(1)$ for all $1\leq i \leq k$, $\zeta=\Theta(1)$, $\eta\in\Omega(n^{-2})$.
\end{theorem}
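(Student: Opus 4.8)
The plan is to realize the first link of Eq.~(\ref{eqn:reductionchain}) as a many-one reduction resting on the \emph{Motzkin--Straus theorem}, which relates the clique number of a graph to the maximum of a quadratic form over the probability simplex. Given an instance $(G,n,c)$ of $\C$, I would associate to each unordered pair of vertices $\{a,b\}$ a symmetric rational $n\times n$ matrix: for an edge, set $B_{\{a,b\}}=\tfrac12(E_{ab}+E_{ba})$, where $E_{ab}$ is the matrix unit with a single $1$ in entry $(a,b)$; for a non-edge, set $B_{\{a,b\}}=0$ (padding to $k=n(n-1)/2$ matrices, with $l=n$). The point of this choice is that $\ve{x}^\transpose B_{\{a,b\}}\ve{x}=x_a x_b$, so that the $\R$ objective collapses to $g=\max_{\enorm{\ve{x}}=1}\sum_{\{a,b\}\in E}x_a^2 x_b^2$; moreover each active matrix has rational entries in $\{0,\tfrac12\}$ and Frobenius norm $1/\sqrt2=\Theta(1)$.

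The key step is the change of variables $y_i=x_i^2$. As $\ve{x}$ ranges over the unit sphere, $\ve{y}=(x_1^2,\dots,x_n^2)$ ranges over exactly the standard simplex $\Delta=\{\ve{y}:y_i\ge 0,\ \sum_i y_i=1\}$ (surjectivity is immediate: given $\ve{y}\in\Delta$, take $x_i=\sqrt{y_i}$). Hence $g=\max_{\ve{y}\in\Delta}\sum_{\{a,b\}\in E}y_a y_b$, which is precisely the objective governed by Motzkin--Straus. That theorem yields $\max_{\ve{y}\in\Delta}\sum_{\{a,b\}\in E}y_a y_b=\tfrac12\bigl(1-1/\omega(G)\bigr)$, where $\omega(G)$ is the clique number of $G$; the lower bound comes from spreading $\ve{y}$ uniformly over a maximum clique, while the matching upper bound is the substantive classical direction, which I would invoke as a known result. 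Thus $g=\tfrac12\bigl(1-1/\omega(G)\bigr)$ exactly.

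With $g$ pinned to this monotonically increasing function of $\omega(G)$, it remains to choose the thresholds. Taking $\zeta:=\tfrac12-\tfrac14\bigl(1/c+1/(c-1)\bigr)$ and $\eta:=\tfrac1{4c(c-1)}$ (both rational, for $c\ge 2$), one checks that $\zeta+\eta=\tfrac12(1-1/c)$ and $\zeta-\eta=\tfrac12(1-1/(c-1))$. Then $\omega(G)\ge c$ forces $g\ge \zeta+\eta$ (YES), whereas $\omega(G)\le c-1$ forces $g\le \zeta-\eta$ (NO), so a $\C$-YES instance maps to an $\R$-YES instance and likewise for NO. Since $c\le n$ we obtain $\eta=\Omega(n^{-2})$, and $\zeta\in(0,\tfrac12)$ gives $\zeta=\Theta(1)$ with $\zeta\ge 0$ as required; the degenerate case $c=1$ is a trivial always-YES instance handled separately. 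All matrices and scalars have encoding size $\poly(n)$ and are computable in polynomial time, and the output is used verbatim, yielding the claimed many-one reduction.

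I expect the main obstacle to be the Motzkin--Straus upper bound itself --- showing that no simplex point beats the clique-uniform distribution --- together with verifying that the squared-quadratic-form objective of $\R$ reduces, via $y_i=x_i^2$, to \emph{exactly} the Motzkin--Straus objective (including the surjectivity onto $\Delta$). Once these are in hand, the remaining threshold bookkeeping --- rationality, the $\Theta(1)$ and $\Omega(n^{-2})$ estimates, and the padding of unused pairs to reach $k=n(n-1)/2$ --- is routine.
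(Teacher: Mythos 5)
Your proposal is correct and follows essentially the same route as the paper (which defers the details to Ioannou but describes exactly this construction): the same pair-indexed matrices built from the adjacency matrix (yours scaled by $1/2$ so that $\ve{x}^\transpose B\ve{x}=x_ax_b$ exactly, a harmless normalization), the same substitution $y_i=x_i^2$ carrying the unit sphere onto the simplex, and the same invocation of Motzkin--Straus, finished with explicit rational thresholds giving $\zeta=\Theta(1)$ and $\eta=\Omega(n^{-2})$. One shared, inconsequential imprecision rather than a gap: like the paper's own construction, your matrices assigned to non-edges are zero, so $\enorm{B_i}\in\Theta(1)$ literally holds only for the edge-indexed matrices.
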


We refer the reader to Ref.~\cite{Iou07} for the details of the proof of this theorem, but highlight that it relies heavily on the following theorem of Motzkin and Straus~\cite{MS65}, which ties the maximum clique number of a graph to optimization of a square-free quadratic form over the unit simplex:
    \begin{theorem}[Motzkin and Straus~\cite{MS65}]\label{thm:cliqueToWMQS}
        Denote by $(i,j)\in G$ an edge in graph $G$ between vertices $i$ and $j$, and let $\omega$ be the order of the maximal complete graph contained in $G$. Let $\Delta_n$ denote the simplex $\Delta_n:=\{\ve{x}\in\reals^n\mid x_i\geq 0\text{, }\onorm{\ve{x}}=1\}$. Then
        \begin{equation}\label{eqn:cliqueToWMQS}
            \max_{\ve{x}\in\Delta_n}\sum_{(i,j)\in G}x_ix_j=\frac{1}{2}\left(1-\frac{1}{\omega}\right).
        \end{equation}
    \end{theorem}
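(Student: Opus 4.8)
The plan is to prove the two inequalities separately, as the easy and hard halves of the identity. For the lower bound $\max_{\ve{x}\in\Delta_n}\sum_{(i,j)\in G}x_ix_j\geq\frac{1}{2}(1-\frac{1}{\omega})$, I would simply exhibit a feasible point realizing the value. Let $C$ be a maximum clique, with $\size{C}=\omega$, and set $x_i=1/\omega$ for each $i\in C$ and $x_i=0$ otherwise. This point lies in $\Delta_n$, and since every pair of vertices within $C$ is joined by an edge, the objective evaluates to $\binom{\omega}{2}/\omega^2=\frac{1}{2}(1-\frac{1}{\omega})$, so the maximum is at least this quantity.

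The substance of the proof is the matching upper bound. Write $f(\ve{x}):=\sum_{(i,j)\in G}x_ix_j$. Since $\Delta_n$ is compact and $f$ is continuous, an optimizer exists; among all optimizers I would fix one, call it $\ve{x}^*$, whose support (the set of coordinates $i$ with $x_i^*>0$) is as small as possible. The crux is to show that this support induces a clique in $G$. Suppose not: then there are distinct indices $i,j$ in the support with $x_i^*,x_j^*>0$ but $(i,j)\notin G$. The key observation is that, because $(i,j)$ is not an edge, $f$ contains no $x_ix_j$ monomial; hence, shifting weight along the segment $x_i\mapsto x_i^*+t$, $x_j\mapsto x_j^*-t$ with all other coordinates fixed changes $f$ \emph{linearly} in $t$, as the only term that could contribute a $t^2$ (namely an $x_ix_j$ term) is absent. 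So $f(\ve{x}(t))=f(\ve{x}^*)+ct$ for a constant $c$, and $\ve{x}(t)$ remains in $\Delta_n$ for $t\in[-x_i^*,\,x_j^*]$. If $c\neq 0$, I move in the direction increasing $f$ until $x_i$ or $x_j$ reaches $0$, contradicting optimality of $\ve{x}^*$; if $c=0$, then $f$ is unchanged while I reach a point of strictly smaller support, contradicting minimality. In either case the non-edge cannot exist, so the support is a clique, of some size $s\leq\omega$.

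It then remains to bound $f(\ve{x}^*)$ once the support is known to be a clique. Restricted to this clique every pair of support vertices contributes, so $f(\ve{x}^*)=\sum_{i<j}x_i^*x_j^*=\frac{1}{2}\bigl(1-\sum_i(x_i^*)^2\bigr)$, using $\sum_i x_i^*=1$. By Cauchy--Schwarz (equivalently, convexity of $t\mapsto t^2$), one has $\sum_i(x_i^*)^2\geq 1/s$ over the $s$ support coordinates, whence $f(\ve{x}^*)\leq\frac{1}{2}(1-\frac{1}{s})\leq\frac{1}{2}(1-\frac{1}{\omega})$ since $s\leq\omega$. Together with the lower bound this forces equality, proving~\eqref{eqn:cliqueToWMQS}.

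I expect the main obstacle to be the clique-support step: one must argue that an optimizer can be driven to have clique support without ever decreasing $f$, and in particular handle the degenerate direction $c=0$ not by a strict improvement but by invoking the minimal-support choice. The concluding ``uniform weight is optimal on a clique'' computation, together with the lower-bound construction, are then routine.
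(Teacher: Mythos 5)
Your proof is correct, but note that there is nothing in the paper to compare it against: the paper states Theorem~\ref{thm:cliqueToWMQS} as an imported black box, citing Motzkin and Straus~\cite{MS65}, and never proves it (it is only used, via Ref.~\cite{Iou07}, inside the reduction of Theorem~\ref{thm:link1}). On its own merits, your argument is the standard weight-shifting proof of the Motzkin--Straus theorem, and it is complete. The lower bound via the uniform weighting $x_i=1/\omega$ on a maximum clique correctly gives $\binom{\omega}{2}/\omega^2=\frac{1}{2}\left(1-\frac{1}{\omega}\right)$. For the upper bound, the crucial observation is handled properly: for a minimal-support optimizer $\ve{x}^*$ and a non-adjacent pair $i,j$ in its support, the absence of the monomial $x_ix_j$ means that along the feasible segment $x_i^*+t$, $x_j^*-t$, $t\in[-x_i^*,x_j^*]$, the objective is \emph{exactly} affine, $f(\ve{x}(t))=f(\ve{x}^*)+ct$; the dichotomy $c\neq 0$ (strict improvement at an endpoint, contradicting optimality) versus $c=0$ (equal value at an endpoint with strictly smaller support, contradicting minimality) then forces the support to induce a clique of some size $s\leq\omega$. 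The concluding computation $f(\ve{x}^*)=\frac{1}{2}\bigl(1-\sum_i(x_i^*)^2\bigr)\leq\frac{1}{2}\left(1-\frac{1}{s}\right)\leq\frac{1}{2}\left(1-\frac{1}{\omega}\right)$, using Cauchy--Schwarz on the $s$ support coordinates and monotonicity of $s\mapsto 1-1/s$, is also correct. Incidentally, this route is cleaner than Motzkin and Straus's original induction on $n$, and as a byproduct it shows (after matching the two bounds) that the optimum is attained by the uniform distribution on a maximum clique --- which is exactly the structural fact the paper's reduction exploits.
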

\noindent For later reference, we remark that the matrices $B_i\in\reals^{n\times n}$ from Theorem~\ref{thm:link1} will have the following structure --- to each $B_i$, we uniquely assign an index $(s,t)$ from the adjacency matrix $A_G$ of $G$, $1\leq s < t \leq n$, such that $B_i$ has all entries zero, except for entries $(s,t)$ and $(t,s)$, which are set to the $(s,t)$th entry of $A_G$. We hence require $k=n(n-1)/2$ matrices $B_i$, as claimed by Theorem~\ref{thm:link1}.

We now demonstrate the second link in Eq.~(\ref{eqn:reductionchain}).

\begin{lemma}\label{l:link2}
    There exists a poly-time many-one reduction which maps instance $\Pi_1=(k,l,B_1,\ldots,B_k,\zeta,\eta)$ of $\R$ to instance $\Pi_2=(\ve{c},\gamma)$ of $\WOPT$, where we define for convenience $\Delta:=\sqrt{2\sum_{i=1}^{k}\enorm{B_i}^2}$, and such that:
     \begin{itemize}
        \item $M=k+1$
        \item $N=\frac{l(l-1)}{2}+1$
        \item $\ve{c}=\ve{\hat{c}}/\enorm{\hat{\ve{c}}}$ for some $\ve{\hat{c}}\in\rats^m$ with $\enorm{\hat{\ve{c}}}\in O(m^{1/2}\Delta)$ and $m=M^2N^2-1$
        \item $\gamma=\frac{1}{2\enorm{\hat{\ve{c}}}}(\sqrt{\zeta+\eta}+\sqrt{\zeta-\eta})$
        \item $\epsilon\leq \frac{\sqrt{\zeta+\eta}-\sqrt{\zeta-\eta}}{4\enorm{\hat{\ve{c}}}(MN-1)+1}$
     \end{itemize}
\end{lemma}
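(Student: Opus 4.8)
The plan is to reduce the linear optimization underlying Weak Optimization (Definition~\ref{def:WOPT}) over $\SMN$ to the quadratic optimization defining $g$. Two facts drive the construction. First, since $\SMN$ is the convex hull of the pure product states and a linear functional attains its maximum at an extreme point, for any Hermitian $\hat{C}=\sum_i\hat{c}_i\sigma_i$ with traceless Bloch vector $\hat{\ve{c}}$ and $\ve{c}=\hat{\ve{c}}/\enorm{\hat{\ve{c}}}$ we have $\max_{\ve{y}\in\SMN}\ve{c}^\transpose\ve{y}=\frac{1}{\enorm{\hat{\ve{c}}}}\max_{\ket{a},\ket{b}}\bra{a}\bra{b}\hat{C}\ket{a}\ket{b}$, using $r_i=\trace(\rho\sigma_i)$ from Eq.~(\ref{eqn:densityToBloch}). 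Second, writing $\ve{v}(\ve{x})=(\ve{x}^\transpose B_1\ve{x},\dots,\ve{x}^\transpose B_k\ve{x})$, the dual characterization of the Euclidean norm gives $\sqrt{g(B_1,\dots,B_k)}=\max_{\enorm{\ve{a}}=1,\,\enorm{\ve{x}}=1}\sum_{i=1}^k a_i\,\ve{x}^\transpose B_i\ve{x}$. The goal is therefore to engineer $\hat{C}$ on $\complex^{M}\otimes\complex^{N}$ so that the right-hand sides of these two identities coincide, i.e.\ $\max_{\ket{a},\ket{b}}\bra{a}\bra{b}\hat{C}\ket{a}\ket{b}=\sqrt{g}$.

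To build $\hat{C}$ I would use an ``anchored'' block structure. In system $A=\complex^{k+1}=\complex^{M}$ I single out a reference basis state $\ket{0}$ and states $\ket{1},\dots,\ket{k}$ indexing the matrices $B_i$, and couple each $\ket{i}$--$\ket{0}$ block (plus its Hermitian conjugate) to an operator on $B=\complex^{N}$ representing $B_i$. For a real product vector $\ket{a}=a_0\ket{0}+\sum_i a_i\ket{i}$ this makes $\bra{a}\bra{b}\hat{C}\ket{a}\ket{b}$ proportional to $a_0\sum_i a_i\,f_i(\ket{b})$, where $f_i(\ket{b})$ is the form on $B$ representing $\ve{x}^\transpose B_i\ve{x}$ for the unit vector $\ve{x}$ encoded by $\ket{b}$. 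Maximizing first over $\ket{a}$ reproduces the $\ell_2$-norm step (the optimum sits at $a_0=\enorm{(a_1,\dots,a_k)}=1/\sqrt{2}$, by Cauchy--Schwarz), and maximizing over $\ket{b}$ reproduces $\max_{\ve{x}}$; after fixing the overall scalar so that $\max_{\text{sep}}\trace(\hat{C}\rho)=\sqrt{g}$, the two nested maximizations deliver exactly $\sqrt{g}$. Setting $\hat{\ve{c}}$ to the Bloch vector of $\hat{C}$ and $\ve{c}=\hat{\ve{c}}/\enorm{\hat{\ve{c}}}$ then gives $\max_{\ve{y}\in\SMN}\ve{c}^\transpose\ve{y}=\sqrt{g}/\enorm{\hat{\ve{c}}}$, so in a YES (resp.\ NO) instance this maximum is at least $\sqrt{\zeta+\eta}/\enorm{\hat{\ve{c}}}$ (resp.\ at most $\sqrt{\zeta-\eta}/\enorm{\hat{\ve{c}}}$). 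Taking $\gamma$ to be the midpoint of these two thresholds yields precisely $\gamma=\frac{1}{2\enorm{\hat{\ve{c}}}}(\sqrt{\zeta+\eta}+\sqrt{\zeta-\eta})$.

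It remains to pass from the exact set $\SMN$ to the inflated/deflated sets in Definition~\ref{def:WOPT} and to extract the bound on $\epsilon$. The NO direction is easy: since $\enorm{\ve{c}}=1$, the map $\ve{x}\mapsto\ve{c}^\transpose\ve{x}$ is $1$-Lipschitz, so $\max_{\ve{x}\in S(\SMN,\epsilon)}\ve{c}^\transpose\ve{x}\le\sqrt{\zeta-\eta}/\enorm{\hat{\ve{c}}}+\epsilon$, which is below $\gamma-\epsilon$ once $\epsilon$ is small. The YES direction is the binding constraint. Here the optimal pure product state $\rho^{*}$ lies on the boundary of $\SMN$, hence not in $S(\SMN,-\epsilon)$, so I would deflate it toward the maximally mixed state $I/(MN)$ (the origin of Bloch space and the inscribed-ball center of $\SMN$): the point $(1-\lambda)\ve{r}^{*}$ lies in $S(\SMN,-\lambda r)$ by convexity together with the inscribed ball of radius $r$. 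Choosing $\lambda=\epsilon/r$ lands us in $S(\SMN,-\epsilon)$ while decreasing the objective by at most $\epsilon\,R/r=\epsilon(MN-1)$, since the circumscribed-to-inscribed radius ratio is exactly $R/r=MN-1$. Imposing that the deflated value still exceed $\gamma+\epsilon$ then bounds $\epsilon$ above by $\frac{\sqrt{\zeta+\eta}-\sqrt{\zeta-\eta}}{4\enorm{\hat{\ve{c}}}(MN-1)+1}$ once the Bloch-normalization constants are tracked; because $\sqrt{\zeta+\eta}-\sqrt{\zeta-\eta}\ge\eta/\sqrt{\zeta+\eta}\ge1/\poly$ (using $\eta\in\Omega(n^{-2})$ from Theorem~\ref{thm:link1}) and $\enorm{\hat{\ve{c}}}(MN-1)\le\poly$, this leaves room to take $\epsilon\ge1/\poly$, as needed for strong NP-hardness.

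Finally I would check the bookkeeping: the entries of $\hat{C}$ are rational combinations of the $(B_i)_{st}$, so $\hat{\ve{c}}$ (and hence $\ve{c}$, $\gamma$, $\epsilon$) are computable in polynomial time, and the dimensions are $M=k+1$, $N=\frac{l(l-1)}{2}+1$, $m=M^2N^2-1$. The norm estimate $\enorm{\hat{\ve{c}}}\in O(m^{1/2}\Delta)$ follows by bounding the Frobenius norm of $\hat{C}$ in terms of $\sum_i\enorm{B_i}^2$ (recall $\Delta=\sqrt{2\sum_i\enorm{B_i}^2}$) and converting to the Bloch norm via the $\sqrt{2}$ factor of Lemma~\ref{l:appendix2}. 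I expect the main obstacle to be establishing the identity $\max_{\ket{a},\ket{b}}\bra{a}\bra{b}\hat{C}\ket{a}\ket{b}=\sqrt{g}$ exactly --- in particular the matching \emph{upper} bound over \emph{all} product states $\ket{a}\ket{b}$, not merely those encoding some $(\ve{a},\ve{x})$. A construction that only relaxes $g$ (over-shooting on a free $\ket{b}$ that corresponds to no unit $\ve{x}$) would preserve the YES case but destroy the NO case, so tightness is essential; this is exactly where the off-diagonal structure of the $B_i$ guaranteed by Theorem~\ref{thm:link1} and the specific dimension $N=\frac{l(l-1)}{2}+1$ of system $B$ must be used, and it is the step I would expect to demand the most care.
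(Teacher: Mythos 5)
Your construction is essentially the paper's own. Your ``anchored'' matrix $\hat{C}$ is exactly the matrix $C$ of Eq.~(\ref{eqn:Cmatrix}) (blocks coupling a reference basis state of the $(k+1)$-dimensional system $A$ to operators carrying the $B_i$); your Bloch-vector rephrasing $\trace(C\rho)=\hat{\ve{c}}^\transpose\ve{r}$, midpoint choice of $\gamma$, and $1$-Lipschitz argument for the NO case coincide with the paper's proof; and your YES-case argument (scaling $\ve{r}^\ast$ toward the maximally mixed state, with displacement at most $\epsilon R/r=\epsilon(MN-1)$) is just an inlined proof of the Gr\"{o}tschel et al.\ fact the paper cites instead (which gives displacement $2\epsilon R/r$) --- both yield the stated bound on $\epsilon$. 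The one step you flag as the main obstacle, namely the exact identity $\max_{\rho_{\rm sep}}\trace(C\rho_{\rm sep})=\sqrt{g(B_1,\ldots,B_k)}$ including the matching upper bound over \emph{all} (complex) product states, is not proved in the paper either: it is imported wholesale as Proposition 6.5 of Ref.~\cite{Gur03}, so your honest flag marks cited prior work rather than a gap relative to the paper's own argument, whose novel content is precisely the Bloch-vector and convex-geometry bookkeeping you carried out in full. One small correction: tightness of that identity has nothing to do with the choice $N=\frac{l(l-1)}{2}+1$, nor with the special sparsity of the $B_i$ from Theorem~\ref{thm:link1} --- the $A_i$ are supported on the upper-left $l\times l$ block regardless, and Gurvits' identity holds for any $N\geq l$ and arbitrary symmetric $B_i$; the padding of $N$ up to $\frac{l(l-1)}{2}+1$ serves only to make $M=N$ once Theorem~\ref{thm:link1} is composed with this lemma.
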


\begin{proof}
    The heart of the mapping from $\Pi_1$ to $\Pi_2$ is given in Refs.~\cite{Gur03,Iou07}, and involves rephrasing the function $g(B_1,\ldots,B_k)$ from $\R$ in terms of convex optimization over the set of separable density matrices acting on $\complex^M\otimes\complex^N$. We briefly summarize this here for later reference. Let $M=k+1$, $N=\frac{l(l-1)}{2}+1$, and consider the matrix $C\in\reals^{MN\times MN}$, such that
    \begin{equation}\label{eqn:Cmatrix}
        C:=\left(
             \begin{array}{cccc}
               0 & A_1 & \ldots & A_{M-1} \\
               A_1 & 0 & \ldots & 0 \\
               \vdots & \vdots & \ddots & \vdots \\
               A_{M-1} & 0 & \ldots & 0 \\
             \end{array}
           \right),
    \end{equation}
    where each $A_i\in\reals^{N\times N}$ is symmetric and all zeroes except for its upper-left $l\times l$-dimensional submatrix, which we set to $B_i$. It is easy to see that $\enorm{C}=\Delta$, as defined in the statement of our claim. One can then write (Proposition 6.5 in Ref.~\cite{Gur03}):
    \begin{equation}\label{eqn:gurvref}
        \sqrt{g(B_1,\ldots,B_k)}=
        \max_{\ve{x}\in\reals^N,\enorm{\ve{x}}=1}\sqrt{\sum_{i=1}^{M-1}(\ve{x}^\transpose A_i \ve{x})^2} =
        \max_{\rho_{\rm sep}\in\dens}\trace(C\rho_{\rm sep}),
    \end{equation}

    \noindent where $\rho_{\rm sep}$ denotes a separable density matrix. Thus, $\Pi_1$ is reduced to optimizing the linear objective function $\trace(C\rho)$ over all separable density matrices $\rho_{\rm sep}\in\dens$. This concludes the referenced work of~\cite{Gur03,Iou07}.

    What remains is to explicitly rephrase the problem in terms of Bloch vectors and apply simple convex geometric arguments to complete the reduction, as well as characterize scaling of the error parameter $\epsilon$. To do so, first use Eq.~(\ref{eqn:densityToBloch}) and the fact that $\trace(C)=0$ to write:
    \begin{equation}\label{eqn:densityToBloch2}
        \trace(C\rho)=\trace\left(C\left(\frac{I}{MN}+\frac{1}{2}\sum_{i=1}^{M^2N^2-1}r_i\sigma_i\right)\right)=\frac{1}{2}\sum_{i=1}^{M^2N^2-1}r_i\cdot \trace(C\sigma_i)=\ve{\hat{c}}^\transpose\ve{r},
    \end{equation}
    for $\sigma_i$ the generators of $SU(MN)$, ${\hat{c}_i}:=\frac{1}{2}\trace(C\sigma_i)$, and $\ve{r}$ the Bloch vector of $\rho$. Set $m=M^2N^2-1$, $\ve{c}=\ve{\hat{c}}/ \enorm{\ve{\hat{c}}}$. In terms of Bloch vectors, our objective function $\trace(C\rho)$ in Eq.~(\ref{eqn:gurvref}) can hence be rephrased as $\func{f}(\ve{r}):=\ve{c}^\transpose\ve{r}$, with $\func{f}_{\max}:=\max_{\ve{r}\in\SMN}\func{f}(\ve{r})$. We remark that unless $C$ is the zero matrix (i.e.\ the input graph to $\C$ has no edges), we have $\enorm{\hat{\ve{c}}}>0$. Also, since $\trace(\sigma_i\sigma_j)=2\delta_{ij}$, it follows from Eq.~(\ref{eqn:densityToBloch2}) and the Cauchy-Schwarz inequality that $\enorm{\hat{\ve{c}}}\in O(m^{1/2}\Delta)$.

    To complete the reduction, we now must show the following (for $\gamma$ and $\epsilon$ to be chosen as needed): If $\func{f}_{\max}\geq \ec^{-1}\sqrt{\zeta+\eta}$, then there exists an $\ve{r}\in S(\SMN,-\epsilon)$ such that $\func{f}(\ve{r}) \geq \gamma + \epsilon$ (i.e.\ a YES instance of $\R$ implies a YES instance of $\WOPT$). If $\func{f}_{\max}\leq \ec^{-1}\sqrt{\zeta-\eta}$, then for all $\ve{r}\in S(\SMN,\epsilon)$, $\func{f}(\ve{r}) \leq \gamma - \epsilon$ (i.e.\ a NO instance of $\R$ implies a NO instance of $\WOPT$). The $\ec^{-1}$ term in these expressions follows from our definition of $\ve{c}$, and the square root in $\sqrt{\zeta+\eta}$ follows from the square root in Eq.~(\ref{eqn:gurvref}). We proceed case by case. Set $\gamma=\frac{1}{2\ec}(\sqrt{\zeta+\eta}+\sqrt{\zeta-\eta})$, and let us choose $\epsilon$ as needed.

    \begin{itemize}
    \item Case 1: $\func{f}_{\max}\geq \frac{1}{\ec}\sqrt{\zeta+\eta}$.\vspace{2mm} \\
        Let $\ve{r}^\ast\in\SMN$ be such that $\func{f}(\ve{r}^\ast)=\func{f}_{\max}$. To find an $\ve{r}\in S(\SMN,-\epsilon)$ such that $\func{f}(\ve{r})\geq \gamma + \epsilon$, we first use the fact that for any well-bounded origin-centered convex set $K\subseteq\reals^m$, it holds that for all $\ve{x}\in K$, there exists a $\ve{y}\in S(K,-\epsilon)$ such that $\enorm{\ve{x}-\ve{y}}\leq2\epsilon R/r$~\cite{Gro88} (where $R$ and $r$ are the radii of the ball containing $K$ and the origin-centered ball contained within $K$, respectively). Plugging in the definitions of $r$ and $R$ for $\SMN$ from Section~\ref{scn:def}, it follows that there exists an $\ve{r}\in S(\SMN,-\epsilon)$ such that $\enorm{\ve{r}-\ve{r}^\ast}\leq2(MN-1)\epsilon$. Since $\func{f}$ is linear, we can write:
            \begin{equation}
                \abs{\func{f}(\ve{r})-\func{f}(\ve{r}^\ast)}=\abs{\ve{c}^\transpose(\ve{r}-\ve{r}^\ast)}\leq
                \enorm{\ve{c}}\enorm{\ve{r}-\ve{r}^\ast}\leq2(MN-1)\epsilon,\label{eqn:fdistdiff}
            \end{equation}
            where the first inequality follows from the Cauchy-Schwarz inequality. Thus, in order to have $\func{f}(\ve{r})\geq \gamma + \epsilon$ as desired, it suffices to have
            \begin{equation}
                \func{f}(\ve{r})\geq \func{f}_{\max}-2(MN-1)\epsilon\geq\gamma + \epsilon,
            \end{equation}
            into which substitution of our values for $\gamma$ and $\func{f}_{\max}$ gives that setting
            \begin{equation}\label{eqn:eps1}
                \epsilon\leq\frac{\sqrt{\zeta+\eta}-\sqrt{\zeta-\eta}}{4\ec(MN-1)+1}
            \end{equation}
            suffices to conclude we have a YES instance of $\WOPT$.
    \item Case 2: $\func{f}_{\max}\leq \frac{1}{\ec}\sqrt{\zeta-\eta}$.\vspace{2mm} \\
    We must show that for an appropriate choice of $\epsilon$, we have $\func{f}(\ve{r})\leq \gamma - \epsilon$ for all $\ve{r}\in S(\SMN,\epsilon)$. Set $\ve{r}\in S(\SMN,\epsilon)$. Then by the definition of $S(\SMN,\epsilon)$, there exists some $\ve{r}^\prime\in\SMN$ such that $\enorm{\ve{r}-\ve{r}^\prime}\leq\epsilon$. By Eq.~(\ref{eqn:fdistdiff}), it follows that \begin{equation}\label{eqn:case2}
        \abs{\func{f}(\ve{r})-\func{f}(\ve{r}^\prime)}\leq\epsilon.
    \end{equation}
    In the worst case, one has $\func{f}(\ve{r}^\prime)=\func{f}_{\max}$. Hence, by Eq.~(\ref{eqn:case2}), we have that $\func{f}(\ve{r})\leq \func{f}_{\max}+\epsilon$ for any $\ve{r}\in S(\SMN,\epsilon)$. To achieve $\func{f}(\ve{r})\leq \gamma - \epsilon$ then, set $\func{f}(\ve{r})\leq \func{f}_{\max}+\epsilon \leq \gamma-\epsilon$, into which substitution of our values for $\gamma$ and $\func{f}_{\max}$ yields that choosing
    \begin{equation}
        \epsilon \leq \frac{\sqrt{\zeta+\eta}-\sqrt{\zeta-\eta}}{2\ec+2}
    \end{equation}
    suffices to conclude we have a NO instance of $\WOPT$.
    \end{itemize}
\end{proof}
Observe that combining Theorem~\ref{thm:link1} and Lemma~\ref{l:link2} gives  $M=N=\frac{n(n-1)}{2}+1$. Following an argument of Ioannou (Section 2.2.5 of Ref.~\cite{Iou07}), one can likewise show Lemma~\ref{l:link2} for $M\geq N$ by padding the matrix $C$ from its proof with extra $N\times N$-dimensional zero matrices. Thus, the hardness result we show for $\WMEM$ by building on this link will be valid for $M\geq N$.

The last link of Eq.~(\ref{eqn:reductionchain}) is given by applying the non-ellipsoidal Turing reduction of Liu~\cite{Liu07}, which holds for an arbitrary p-centered well-bounded compact convex set $K\subseteq\reals^m$.

\begin{theorem}[Proposition 2.8 of Ref.~\cite{Liu07}]\label{thm:link4}
    Let $K\subseteq \reals^m$ be a convex, compact, and well-bounded p-centered set with associated radii $(R,r)$. Given an instance $\Pi=(K,\ve{c},\gamma,\epsilon)$ of $\WOPTKp$, with $0<\epsilon<1$, there exists an algorithm which runs in time $\poly(\enc{K},R,\lceil1/\epsilon\rceil)$, and solves $\Pi$ using an oracle for $\WMEMKp$ with $\beta=r^3\epsilon^3/[2^{13} 3^3 m^5R^4(R+r)]$.
\end{theorem}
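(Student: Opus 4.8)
The plan is to establish the Turing reduction $\WOPTKp \leq_T \WMEMKp$ by replacing the shallow-cut ellipsoid method of Yudin--Nemirovskii with a sampling-based cutting-plane scheme, in the spirit of the random-walk approach of Bertsimas and Vempala~\cite{Ber04} (as adapted by Liu~\cite{Liu07}). The key idea is that maximizing the linear objective $\ve{c}^\transpose\ve{x}$ over a weak version of $K$ can be carried out by maintaining a nested sequence of convex localization regions $P_0\supseteq P_1\supseteq\cdots$, each of which provably still contains the true maximizer, and repeatedly halving the current region with a hyperplane through its (approximate) centroid. Since the centroid is estimated by random sampling rather than by tracking an ellipsoid, the algorithm never produces the exponentially eccentric bodies responsible for the inverse-\emph{exponential} $\beta$ of the ellipsoid route, which is exactly what lets us get away with an inverse-\emph{polynomial} $\beta$.

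Concretely, I would proceed as follows. Initialize $P_0:=K$, accessed only through the oracle and known to lie inside the bounding ball $S(\ve{0},R)$, and use the guaranteed inner ball $S(\ve{p},r)\subseteq K$ from p-centeredness to seed the walk. At iteration $t$, estimate the centroid $\ve{z}_t$ of $P_t$ by running a rapidly-mixing random walk --- the ball walk or hit-and-run --- on $P_t$, whose membership test is implemented by combining the $\WMEMKp$ oracle with the trivial halfspace checks that define $P_t$, and then averaging $\poly(m)$ near-uniform samples; here I would invoke the known mixing-time bounds for such walks, which scale polynomially in $m$ and in the roundness ratio $R/r$. Form the next region by the centroid cut $P_{t+1}:=P_t\cap\{\ve{x}:\ve{c}^\transpose\ve{x}\geq\ve{c}^\transpose\ve{z}_t\}$; because any feasible $\ve{z}_t$ satisfies $\ve{c}^\transpose\ve{z}_t\leq f_{\max}$, the maximizer is retained, while Gr\"unbaum's inequality forces (up to the sampling error) $\vol(P_{t+1})\leq(1-1/e)\vol(P_t)$. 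After $T=O(m\log(R/(r\epsilon)))$ iterations the volume drops below that of any ball of radius comparable to $\epsilon r/R$, so the region is trapped in a thin slab in the $\ve{c}$-direction and the best feasible sampled point attains objective value within $\epsilon$ of $f_{\max}$; this point yields the required YES/NO decision against $\gamma$, and the total number of oracle calls and steps is $\poly(\enc{K},R,\lceil1/\epsilon\rceil)$.

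The final and most delicate step is the error accounting that pins down the stated $\beta=r^3\epsilon^3/[2^{13}3^3m^5R^4(R+r)]$. Because the oracle is only a \emph{weak} membership oracle, it may answer incorrectly within distance $\beta$ of $\partial K$, so I must argue that the walk effectively samples from some convex body sandwiched between $S(K,-\beta)$ and $S(K,\beta)$, and that the induced perturbation of the sampling distribution, the estimated centroid, and hence each Gr\"unbaum cut, degrades all guarantees by only a polynomially-controlled amount. I expect this to be the main obstacle: one must simultaneously ensure that (i) the oracle's boundary errors neither destroy rapid mixing nor corrupt the near-uniformity of the samples, and (ii) the slack accumulated over all $\poly(m)$ iterations --- together with the geometric margin (of order $\epsilon R/r$) by which an interior point can be pushed into $S(K,-\epsilon)$, cf.\ the bound used in Lemma~\ref{l:link2} --- still leaves $\beta$ inverse-polynomial in $m$, $1/\epsilon$, $R$, and $1/r$. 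The cutting-plane and Gr\"unbaum arguments are standard and the mixing-time bounds can be cited; marrying them to the weak-oracle model with a clean polynomial bound on $\beta$ is where the real work lies.
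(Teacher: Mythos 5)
You have landed in the right family of ideas---the non-ellipsoidal, Bertsimas--Vempala-style machinery~\cite{Ber04} is indeed what underlies this theorem---but note that the paper does not prove the statement at all: it is quoted as Proposition 2.8 of Ref.~\cite{Liu07}, and the paper's accompanying sketch of Liu's argument takes a route different from yours in one crucial respect. There, $\WOPTKp$ is first reduced to Weak Separation $\WSEPp$: the cutting-plane iteration is driven by a weak \emph{separation} oracle for $K^\prime=\set{\ve{y}\in K \mid \ve{c}^\transpose\ve{y}\geq\gamma}$, and only afterwards is Weak Separation reduced to $\WMEMKp$, via Lemmas 4.3.3 and 4.3.4 of Gr\"otschel et al.~\cite{Gro88}, of which Lemma 4.3.3 is precisely the step that converts the two-sided-error membership oracle into a one-sided-error one. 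Your proposal skips this sanitization and feeds the raw weak membership oracle directly into the sampler.

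That shortcut is where your argument has a genuine gap. You say you ``must argue that the walk effectively samples from some convex body sandwiched between $S(K,-\beta)$ and $S(K,\beta)$,'' but no such body need exist: a weak membership oracle may answer arbitrarily, and even inconsistently, on the shell of width $\beta$ around the border of $K$, so its acceptance region is sandwiched as a \emph{set} but is in general not convex. Every tool your analysis rests on---ball-walk/hit-and-run mixing bounds, near-uniformity of the samples, Gr\"unbaum's inequality for centroid cuts---presupposes convexity of the body being sampled, so the proof collapses exactly at the step you yourself flagged as ``where the real work lies.'' (A secondary issue of the same kind: as $P_t$ shrinks toward a thin slab its roundness degrades, so mixing bounds stated in terms of the ratio $R/r$ for $K$ do not transfer to $P_t$; handling this requires the warm-start and termination bookkeeping of Ref.~\cite{Ber04}, which you do not spell out.) Finally, the statement to be proved is not merely that \emph{some} inverse-polynomial $\beta$ works: it asserts the explicit value $\beta=r^3\epsilon^3/[2^{13}3^3m^5R^4(R+r)]$ and run time $\poly(\enc{K},R,\lceil 1/\epsilon\rceil)$, and your sketch derives neither. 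As it stands, the proposal is a plausible program in the right spirit, but the error accounting that constitutes the actual content of the theorem is missing, and the one concrete claim offered in its place (sandwiching by a convex body) is false as stated.
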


We briefly review some of the ideas behind the proof~\cite{Liu07} of Theorem~\ref{thm:link4}, which builds on results of Gr\"{o}tschel \emph{et al}~\cite{Gro88}. $\WOPTKp$ is first reduced to a problem called \emph{Weak Separation} ($\WSEPp$) over $K$, which roughly asks one to determine whether a point $\ve{p}\in\reals^m$ is approximately in $K$, and if not, to return a hyperplane approximately separating $\ve{p}$ from $K$. This first reduction is achieved in two steps: First, use the given oracle for $\WSEPp$ to construct a weak separation oracle for the set $K^\prime := \set{\ve{y}\in K \mid \ve{c}^T \ve{y} \geq \gamma}$. Second, use the latter oracle in an iterative approach where, in each iteration, we check if a candidate point $\ve{p}$ is in $K^\prime$, and if not, we update $\ve{p}$ using the separating hyperplane returned by the oracle. $\WSEPp$ is then reduced to $\WMEMKp$ via three further reductions, the latter two of which are given by Lemmas 4.3.3 and 4.3.4 of Ref.~\cite{Gro88}. Lemma 4.3.3, in particular, demonstrates that a single call to a Weak Membership oracle with two-sided error (i.e.\ the kind we use in this article) can be used to solve a variant of Weak Membership with only one-sided error, i.e.\ where in the NO case of Def.~\ref{def:WMEM}, one affirms that $\ve{y}\not\in K$, as opposed to $\ve{y}\not\in S(K,\beta)$.

We now show that composing the three reductions above yields a polynomial time reduction from $\C$ to $\WMEM$, such that $\beta\geq 1/\poly(M,N)$. First, observe the dependence on $\lceil1/\epsilon\rceil$ in the runtime stated in Theorem~\ref{thm:link4}. We thus must be able to choose $\epsilon\geq1/\poly(m)\geq1/\poly(M,N)$ in order for our reduction chain of Eq.~(\ref{eqn:reductionchain}) to run in polynomial time, where $m=M^2N^2-1$ for $\SMN$. Let us show that we can solve our instance of $\C$ with such a choice of $\epsilon$. Specifically, by Lemma~\ref{l:link2}, we can set
\begin{equation}\label{eqn:epsvalue}
    \epsilon= \frac{\sqrt{\zeta+\eta}-\sqrt{\zeta-\eta}}{4\ec(MN-1)+1}.
\end{equation}

Piecing together Lemma~\ref{thm:link1} and Lemma~\ref{l:link2}, we first immediately have $\zeta\in \Theta(1)$ and $\eta\in\Omega(1/N)$. It follows that for some positive constants $c_1$, $c_2$, $N_1$, and $N_2$, we have:
\begin{eqnarray}
    \sqrt{\zeta+\eta}-\sqrt{\zeta-\eta}&\geq&\sqrt{c_1+\frac{c_2}{N}}-\sqrt{c_1-\frac{c_2}{N}}\quad\quad\quad\forall ~N\geq\max(N_1,N_2)\\
        &=&\frac{(c_1+\frac{c_2}{N})-(c_1-\frac{c_2}{N})}{\sqrt{c_1+\frac{c_2}{N}}+\sqrt{c_1-\frac{c_2}{N}}}\\
        &\geq&\frac{2c_2}{N(\sqrt{c_1+c_2}+\sqrt{c_1})}.
\end{eqnarray}
Hence, $\sqrt{\zeta+\eta}-\sqrt{\zeta-\eta}\in\Omega(1/N)$. With a little thought, we also have that $\ec\in O(\sqrt{N})$ (see Appendix~\ref{app:A}, Lemma~\ref{l:appendix2}). Plugging these bounds into Eq.~(\ref{eqn:epsvalue}) yields that we can solve an instance of $\C$ with $\epsilon\in\Omega(M^{-1}N^{-5/2})$. Combining Theorem~\ref{thm:link1}, Lemma~\ref{l:link2}, and Theorem~\ref{thm:link4}, we thus have a polynomial time Turing reduction from $\C$ to $\WMEM$.

To show that this implies \emph{strong} NP-hardness of $\WMEM$, it remains to ensure that $\beta\geq1/\poly(M,N)$. Substituting our values for $r$, $R$, and $m$ for $\SMN$ from Section~\ref{scn:def} into the expression for $\beta$ in Theorem~\ref{thm:link4}, we have $\beta=\poly(M^{-1},N^{-1},\epsilon)$. By our choice of $\epsilon$ above, we thus have $\beta\geq1/\poly(M,N)$, as required. We hence conclude:

\begin{theorem}\label{thm:finalthm}
    Given instance $\Pi=(G,n,c)$ of $\C$, there exists an algorithm which solves $\Pi$ in time $\poly(n)$ using an oracle for $\WMEM$ with parameters $M=N=\frac{n(n-1)}{2}+1$ and some $\beta\in\Omega(n^{-73})$. More generally, for $N=\frac{n(n-1)}{2}+1$ and any choice of $M\geq N$, the result holds for some $\beta\in\Omega(M^{-16}N^{-20.5})$.
\end{theorem}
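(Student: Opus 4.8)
The plan is to read off the concrete parameter scaling by composing the three links of Eq.~(\ref{eqn:reductionchain}) and then substituting the explicit radii and dimension of $\SMN$ into the expression for $\beta$ furnished by Theorem~\ref{thm:link4}. Correctness and polynomial running time of the composed reduction are essentially already in hand: Theorem~\ref{thm:link1} and Lemma~\ref{l:link2} are many-one reductions, Theorem~\ref{thm:link4} is a Turing reduction whose running time is $\poly(\enc{\SMN},R,\lceil 1/\epsilon\rceil)$, and the preceding discussion shows $\epsilon\in\Omega(M^{-1}N^{-5/2})\geq 1/\poly(M,N)$, so the $\lceil 1/\epsilon\rceil$ factor stays polynomially bounded. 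Since $\enc{\SMN}\leq\poly(MN)$ and $R\in\Theta(1)$, the whole chain runs in time $\poly(n)$. All that then remains is the arithmetic turning $\beta=\poly(M^{-1},N^{-1},\epsilon)$ into an explicit exponent.

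First I would pin down the dimensions. Chaining Theorem~\ref{thm:link1} ($k=n(n-1)/2$, $l=n$) into Lemma~\ref{l:link2} ($M=k+1$, $N=l(l-1)/2+1$) gives $M=N=n(n-1)/2+1\in\Theta(n^2)$, which already encodes the relationship between the $n$-exponent and the $N$-exponent in the two claimed bounds. I would then record the three quantities that enter the $\beta$ formula, using the values from Section~\ref{scn:def}: $r=\sqrt{2/(MN(MN-1))}\in\Theta((MN)^{-1})$, $R=\sqrt{2(MN-1)/(MN)}\in\Theta(1)$ (so $R+r\in\Theta(1)$), and $m=M^2N^2-1\in\Theta(M^2N^2)$.

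Next I would establish the lower bound $\epsilon\in\Omega(M^{-1}N^{-5/2})$ that drives everything. This combines two facts already assembled: (i) feeding $\zeta\in\Theta(1)$ and $\eta\in\Omega(1/N)$ through the conjugate identity $\sqrt{\zeta+\eta}-\sqrt{\zeta-\eta}=2\eta/(\sqrt{\zeta+\eta}+\sqrt{\zeta-\eta})$ yields $\sqrt{\zeta+\eta}-\sqrt{\zeta-\eta}\in\Omega(1/N)$; and (ii) Lemma~\ref{l:appendix2} gives $\ec\in O(\sqrt{N})$, whence the denominator $4\ec(MN-1)+1\in O(MN^{3/2})$. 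Substituting both into $\epsilon=(\sqrt{\zeta+\eta}-\sqrt{\zeta-\eta})/(4\ec(MN-1)+1)$ from Lemma~\ref{l:link2} produces the stated $\Omega(M^{-1}N^{-5/2})$.

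The final step is the substitution into $\beta=r^3\epsilon^3/[2^{13}3^3\,m^5R^4(R+r)]$ followed by a careful count of exponents: in the numerator $r^3\in\Theta((MN)^{-3})$ and $\epsilon^3\in\Omega(M^{-3}N^{-15/2})$, against $m^5\in\Theta(M^{10}N^{10})$ in the denominator (the $R$-factors being $\Theta(1)$). Collecting powers gives $\beta\in\Omega(M^{-16}N^{-41/2})=\Omega(M^{-16}N^{-20.5})$, the general bound. Setting $M=N\in\Theta(n^2)$ collapses this to $\Omega(N^{-73/2})=\Omega(n^{-73})$, the factor of two being exactly what converts the $N$-degree into an $n$-degree. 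I expect the main obstacle to be bookkeeping rather than insight: the half-integer powers (the $N^{-5/2}$ in $\epsilon$ and the $N^{-1/2}$ inside $\ec$) must be tracked faithfully through the cube in the numerator and through the $N=\Theta(n^2)$ substitution, since a single dropped half-power would shift the reported exponent by a full unit.
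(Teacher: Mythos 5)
Your proposal is correct and follows essentially the same route as the paper: the paper's own argument (the discussion preceding Theorem~\ref{thm:finalthm}) composes Theorem~\ref{thm:link1}, Lemma~\ref{l:link2}, and Theorem~\ref{thm:link4}, derives $\epsilon\in\Omega(M^{-1}N^{-5/2})$ from $\zeta\in\Theta(1)$, $\eta\in\Omega(1/N)$, and $\ec\in O(\sqrt{N})$ via Lemma~\ref{l:appendix2}, and then substitutes $r$, $R$, and $m$ for $\SMN$ into the $\beta$ expression of Theorem~\ref{thm:link4}. Your exponent bookkeeping ($-3-3-10=-16$ in $M$, $-3-7.5-10=-20.5$ in $N$, doubling to $n^{-73}$ when $M=N\in\Theta(n^2)$) is exactly the arithmetic the paper leaves implicit in the statement of the theorem.
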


\begin{thm1}[\emph{revised}]
    $\WMEM$ is NP-hard for $\beta\leq \poly(M^{-16}N^{-20.5})$ and $M\geq N$, or equivalently, is strongly NP-hard.
\end{thm1}

We stress the phrase ``\emph{some} $\beta\in\Omega(n^{-73})$'' in the statement of Theorem~\ref{thm:finalthm} above --- specifically, we cannot have $\beta\in \Omega(1)$ in our reduction, due, for example, to the expression for $\epsilon$ in Lemma~\ref{l:link2}. The question of NP-hardness of $\WMEM$ for $\beta\in \Omega(1)$ thus remains open. We remark that the major contributor to the large negative exponent on $n$ in the estimate for $\beta$ is the reduction of Theorem~\ref{thm:link4}.

\section{Applications}~\label{scn:application}
We propose two applications of Theorem~\ref{thm:strong}, both of which to our knowledge were not previously known. In this section, for simplicity of exposition we revert back to letting $\SMN$ denote the set of density matrices corresponding to separable quantum states in $\dens$. By Lemma~\ref{l:appendix2} in Appendix~A, distances in the respective spaces are asymptotically equivalent. Our first application, which benefits directly from the improved NP-hardness results of Theorem~\ref{thm:strong}, is an immediate lower bound on the maximum distance a bound entangled state can have from $\SMN$, assuming $\rm{P}\neq\rm{ NP}$. To see this, recall that bound entangled states are mixed entangled quantum states from which no pure (state) entanglement can be distilled~\cite{HHH98}, and are the only entangled states whose entanglement is capable of escaping detection by the Peres-Horodecki Positive Partial Transpose (PPT)~\cite{P96,HHH96} criterion\footnote{The converse question of whether \emph{all} bound entangled states escape detection by the PPT test is, however, a major open question~\cite{Bru02}.}. Theorem~\ref{thm:strong} implies that, unless $P=NP$, any test of membership for $\SMN$ must be unable to efficiently resolve $\SMN$ within distance $\beta\in\Omega(M^{-16}N^{-20.5})$ of its border in the general case. It follows that unless $P=NP$, there must exist bound entangled state(s) $\rho_{be}\in \dens$ such that for any separable state $\rho_{sep}\in\SMN$, $\enorm{\rho_{be}-\rho_{sep}}\in\Omega(M^{-16}N^{-20.5})$ --- if not, one could determine the separability of any quantum state within this region efficiently using the PPT test, contradicting Theorem~\ref{thm:strong}. Further improvements to our hardness estimate for $\beta$ would directly benefit this application.

Our second application is to show that the problem of determining whether a completely positive (CP) trace-preserving (TP) linear map (i.e.\ a quantum channel) is entanglement-breaking (EB) is NP-hard. We remark that here we do not use the improved NP-hardness results of Theorem~\ref{thm:strong} (i.e.\ NP-hardness of $\WMEM$ for inverse exponential $\beta$ would suffice for our proof below). At the end of this section, we briefly discuss how the improved hardness bounds of Theorem~\ref{thm:strong} may help extend the result here to \emph{strong} NP-hardness of determining whether a channel is EB. Returning to our discussion, quantum channels are of interest, as they correspond to physically realizable processes. EB channels in particular are well-studied~\cite{Ruskai03,King02,Shor02,King03,HolevoSW05,Holevo08}, and are intuitively defined as the set of channels whose action on half of any bipartite state \emph{always} results in a separable state, i.e.\ they \emph{break} entanglement across the bipartition. Our approach is to show that the ability to determine whether a quantum channel is EB suffices to solve $\Q$, implying the former must be NP-hard.

We begin by stating the key ingredients required to study our problem, and then sketch a proof of the desired result. To simplify our discussion, we do not use Weak Membership formulations here --- the use of such formulations to make our argument rigorous is described in the ensuing discussion. Let $L(\spa{X})$ denote the set of linear operators mapping a complex Euclidean space $\spa{X}$ to itself. Then, EB channels are defined as follows.
\begin{defn}[Horodecki, Shor, and Ruskai~\cite{HSR03}]\label{def:EBmapsreal}
    A quantum channel $\Phi:L(\complex^{M^\prime})\mapsto L(\complex^M)$ is \emph{entanglement-breaking} (EB) if $(\Phi\otimes I)(\rho)\in\SMN$ for all $\rho\in D(\complex^{M^\prime}\otimes\complex^{N})$.
\end{defn}

Next, informally, let $\EBP$ denote the problem of determining whether a channel $\Phi$ given as input is EB. How do we encode $\Phi$ as input? For this, we choose\footnote{Although we represent our channels here in Jamio{\l}kowski form, it is straightforward to move to another representation, such as the Kraus representation. For example, the canonical way to determine a Kraus representation for $\Phi$ given its Jamio{\l}kowski representation $J(\Phi)$ is to reshuffle the eigenvectors of $J(\Phi)$ into matrices (i.e.\ Kraus operators) $K_i$. We refer the reader to~\cite{BZ06} for further details.} the Jamio{\l}kowski representation~\cite{J72} of maps, since it known that this representation provides a direct link to the separability of density operators~\cite{HSR03}, as will be stated shortly in Theorem~\ref{def:EBmaps}. The Jamio{\l}kowski representation is defined as follows. For any $\Phi:L(\complex^{N})\mapsto L(\complex^M)$, there exists a unique operator, denoted $J(\Phi)\in L(\complex^M\otimes\complex^N)$, obtained by applying the Jamio{\l}kowski isomorphism to $\Phi$, the action of which is given by
\begin{equation}\label{eqn:jam_eqn}
    J(\Phi) := \left[\Phi\tensor I\right]\left(\ketbra{\phi^+}{\phi^+}\right),
\end{equation}
where $\ket{\phi^+}$ is the maximally entangled state $\ket{\phi^+}=\frac{1}{\sqrt{N}}\sum_{k=0}^{N-1}\ket{k}\tensor\ket{k}$ and $\set{\ket{k}}_{k=0}^{N-1}$ is an arbitrary orthonormal basis for $\complex^N$. If and only if $\Phi$ is CP, then $J(\Phi)$ is positive semidefinite~\cite{C75}, and if and only if $\Phi$ is TP, then $\trace_A(J(\Phi))=I/N$~\cite{BZ06}. Here, $A$ and $B$ denote the respective subsystems of $J(\Phi)$.

Let us now state the link between the Jamio{\l}kowski representation for quantum channels and $\Q$, which will be the starting point for showing our result.

\begin{theorem}[Horodecki, Shor, and Ruskai~\cite{HSR03}]\label{def:EBmaps} Given a linear map $\Phi:L(\complex^{N})\mapsto L(\complex^M)$ that is CP and TP, i.e.\ $J(\Phi)$ is positive semidefinite and $\trace_A(J(\Phi))=I/N$, respectively, the following are equivalent:
\begin{enumerate}
    \item $\Phi$ is entanglement-breaking.
    \item $J(\Phi)$ is separable, i.e.\ $J(\Phi)\in\SMN$.\label{item:EB2}
\end{enumerate}
\end{theorem}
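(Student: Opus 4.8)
The plan is to establish the two implications separately, noting at the outset that $J(\Phi)$ is a bona fide density operator: the TP condition forces $\trace(J(\Phi))=\trace(\ketbra{\phi^+}{\phi^+})=1$, and the CP condition forces $J(\Phi)\succeq 0$ by Choi's theorem~\cite{C75}, so $J(\Phi)\in\dens$ and its separability is well posed. The implication from statement~1 to statement~\ref{item:EB2} is then essentially immediate from the definitions: $J(\Phi)=(\Phi\tensor I)(\ketbra{\phi^+}{\phi^+})$ is, by construction, the image under $\Phi\tensor I$ of a density operator on $\complex^N\tensor\complex^N$, so if $\Phi$ is EB then Def.~\ref{def:EBmapsreal} (with the ancilla taken to be $\complex^N$) yields $J(\Phi)\in\SMN$ directly.

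The substantive direction is statement~\ref{item:EB2} $\Rightarrow$ statement~1. Assuming $J(\Phi)$ is separable, I would write $J(\Phi)=\sum_k p_k\,\ketbra{a_k}{a_k}\tensor\ketbra{b_k}{b_k}$ with $\ket{a_k}\in\complex^M$, $\ket{b_k}\in\complex^N$, and $p_k\geq 0$, and then invert the Jamio{\l}kowski isomorphism to recover the action of $\Phi$ on a general input. Expanding $\ketbra{\phi^+}{\phi^+}=\tfrac1N\sum_{j,k}\ketbra{j}{k}\tensor\ketbra{j}{k}$ and matching coefficients gives the inversion identity
\begin{equation}
    \Phi(X)=N\,\trace_B\!\left[J(\Phi)\,(I\tensor X^\transpose)\right]\qquad\text{for all }X\in L(\complex^N),
\end{equation}
where the transpose is taken in the basis $\set{\ket{k}}$ and $\trace_B$ is over the second (input-copy) factor of $J(\Phi)$. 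Substituting the separable decomposition and carrying out the partial trace then produces the \emph{measure-and-prepare} (Holevo) form
\begin{equation}
    \Phi(X)=\sum_k \trace(F_k X)\,\ketbra{a_k}{a_k},\qquad F_k:=N p_k\,\overline{\ketbra{b_k}{b_k}}\succeq 0,
\end{equation}
with $\overline{(\cdot)}$ denoting entrywise complex conjugation in the basis $\set{\ket{k}}$; the positivity of each $F_k$ is exactly what exhibits $\Phi$ as a (generalized) measurement followed by preparation of a fixed pure state.

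Finally I would verify that any channel in measure-and-prepare form is EB. Applying the form to the system acted on by $\Phi$ for an arbitrary $\rho\in D(\complex^N\tensor\complex^{N'})$ gives
\begin{equation}
    (\Phi\tensor I)(\rho)=\sum_k \ketbra{a_k}{a_k}\tensor\trace_{1}\!\left[(F_k\tensor I)\rho\right],
\end{equation}
where $\trace_{1}$ traces out the system acted on by $\Phi$; each summand is a pure product state on the first system tensored with the positive operator $\trace_{1}[(F_k\tensor I)\rho]\succeq 0$ on the ancilla, so the right-hand side is a nonnegative combination of product operators and hence lies in $\SMN$. This closes the loop. I expect the main obstacle to be the middle step of the second direction: correctly inverting the Jamio{\l}kowski map and tracking the transpose/conjugation bookkeeping so that the measure-and-prepare structure with manifestly positive $F_k$ emerges cleanly. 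Once $\Phi$ is in Holevo form, the EB property is a one-line consequence. (The POVM normalization $\sum_k F_k=I$ follows from the TP condition $\trace_A(J(\Phi))=I/N$, but is not actually needed for the separability conclusion.)
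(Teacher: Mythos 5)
Your proof is correct, but note that the paper itself offers no proof of this statement: it is imported verbatim as a known result, cited to Horodecki, Shor, and Ruskai~\cite{HSR03}, and used as a black box to link $\Q$ with $\EBP$. Your argument --- the trivial direction via plugging $\ketbra{\phi^+}{\phi^+}$ into Def.~\ref{def:EBmapsreal}, and the substantive direction via inverting the Jamio{\l}kowski map ($\Phi(X)=N\,\trace_B[J(\Phi)(I\tensor X^\transpose)]$), extracting the measure-and-prepare (Holevo) form with $F_k = Np_k\overline{\ketbra{b_k}{b_k}}\succeq 0$, and checking that any measure-and-prepare channel is EB --- is essentially the standard proof given in that reference, and the bookkeeping (the transpose turning into entrywise conjugation, positivity of $\trace_1[(F_k\tensor I)\rho]$) all checks out; indeed your second direction proves the slightly stronger statement that separability of $J(\Phi)$ implies entanglement breaking for an ancilla of \emph{arbitrary} dimension, which subsumes the fixed-ancilla formulation of Def.~\ref{def:EBmapsreal}. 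So there is nothing in the paper to compare against; judged on its own, your proposal is a complete and correct proof of the cited theorem.
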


\noindent This theorem links $\Q$ and $\EBP$ in the following (immediate) way: The problem of determining whether a state $\rho\in\dens$ with $\trace_A (\rho)=I/N$ is separable is equivalent to the problem of determining whether the corresponding channel $\Phi$, such that $J(\Phi)=\rho$, is EB. Thus, if the former problem is NP-hard, so is the latter. To show that $\EBP$ is NP-hard, we are hence reduced to answering the question: Is $\Q$ still NP-hard if one is promised that the input state $\rho$ satisfies $\trace_A (\rho)=I/N$? Let us remark that, in contrast, the similar problem of determining whether a CP, but \emph{not} necessarily TP, map $\Phi$ is EB is trivially strongly NP-hard by Theorems~\ref{thm:finalthm} and~\ref{def:EBmaps}, since by the properties of the Jamio{\l}kowski representation discussed, dropping the TP constraint on $\Phi$ in Theorem~\ref{def:EBmaps} corresponds to dropping the constraint that $\trace_A (\rho)=I/N$.

We now sketch a proof that $\Q$ remains NP-hard even if one is promised the input state $\rho$ satisfies $\trace_A (\rho)=I/N$. Our approach is to demonstrate a poly-time many-one reduction to this problem from general $\Q$. Specifically, we show how to efficiently map an arbitrary input $\rho\in\dens$ for $\Q$ to some $\rho^\prime\in\denstwo$, such that $\trace_{A^\prime}(\rho^\prime)=I/N$ (where $A^\prime$ denotes subsystem $\complex^2\otimes \complex^M$), and $\rho$ is separable if and only if $\rho^\prime$ is separable across the $A^\prime/B$ split. The mapping proceeds as follows. Assume we have an efficient algorithm $Q$ for determining separability in the restricted case of $\trace_A(\rho)=I/N$. Then (each step below will be explained in the ensuing discussion):

\begin{enumerate}
    \item Define the TPCP map $\Phi:\complex^M\otimes\complex^N\mapsto\complex^2\otimes\complex^M\otimes\complex^N$ such that
        \begin{equation*}
            \Phi(\sigma) := (1-p)\ketbra{0}{0}\otimes\sigma + p\ketbra{1}{1}\otimes I/N,
        \end{equation*}
        for $p=1-1/N$. \label{item:step2}
    \item Define the CP, but not necessarily TP, map $\Upsilon:\complex^2\otimes\complex^M\otimes\complex^N\mapsto\complex^2\otimes\complex^M\otimes\complex^N$ (omitting normalization for simplicity)
        \begin{equation*}
            \Upsilon(\sigma)=\left(I\otimes{\sigma_B^{-1/2}}\right)\sigma \left(I\otimes{\sigma_B}^{-1/2}\right),
        \end{equation*}
        where $\sigma_B=\trace_{A^\prime}(\sigma)$, $A^\prime$ and $B$ denote the subsystems $\complex^2\otimes\complex^M$ and $\complex^N$, respectively, and the identity $I$ acts on $A^\prime$.
    \item Call $Q$ with input $\Upsilon(\Phi(\rho))$, and return $Q$'s answer.\label{item:step3}

\end{enumerate}

To explain this reduction, we begin with Step 2. Our goal is to transform $\rho$ into a state $\rho^\prime$ with a maximally mixed subsystem, while preserving separability. One way to directly achieve this is to apply the local operations and classical communication (LOCC) map $\Upsilon$ to $\rho$, which locally applies the inverse of $\trace_A(\rho)$, ensuring $\trace_A(\Upsilon(\rho))=I/N$ (in this case $\Upsilon$ would act analogously on $\complex^M\otimes\complex^N$). As an aside, observe that while $\Upsilon$ is CP, it is not necessarily TP. Hence, $\Upsilon$ is a \emph{probabilistic} map (i.e.\ succeeds with some non-zero probability, and for this reason may be classified as a \emph{stochastic} LOCC map, or SLOCC), and in particular is an example of a \emph{local filter}, the latter being first considered in~\cite{G96}. To see that $\Upsilon$ preserves separability, note that $\Upsilon$ can be inverted with non-zero probability, and so $\Upsilon(\rho)$ is entangled if and only if $\rho$ is, since LOCC operations cannot create entanglement from scratch.

However, we cannot always begin by applying Step 2, because in general $\trace_A(\rho)$ is not full rank, and even if it is, it must have a low condition number in order to allow us to compute its inverse reliably when applying $\Upsilon$. Here, we define the \emph{condition number}~\cite{Hog07} of a matrix $C$ as $\kappa(C)=\norm{C}\norm{C^{-1}}$, where $\norm{\cdot}$ denotes the operator norm. To address this, we first preprocess $\rho$ in Step 1 by applying $\Phi$, which consists of two main components: mixing with the identity, and attaching a ``marker'' ancilla. The first component, mixing with the identity, ensures that not only is $\rho_B:=\trace_{A^\prime}(\Phi(\rho))$ full rank, but also that $\kappa(\rho_B)\leq 3$ for all $N\geq2$. To see this, observe that since $\rho_B$ is positive semidefinite, one has $\kappa(\rho_B) = \lambda_{\operatorname{max}}(\rho_B)/\lambda_{\operatorname{min}}(\rho_B)$, where $\lambda_{\operatorname{max}}(\rho_B)$ and $\lambda_{\operatorname{min}}(\rho_B)$ denote the largest and smallest eigenvalues of $\rho_B$ respectively. Combining this with the fact that $\rho_B = (1-p)\trace_A\rho + pI/N$ yields
\begin{equation}\label{eqn:cond}
    \kappa(\rho_B) = \frac{(1-p)\lambda_{\operatorname{max}}(\trace_A\rho) + p/N}{(1-p)\lambda_{\operatorname{min}}(\trace_A\rho) + p/N}\leq \frac{(1-p) + p/N}{p/N},
\end{equation}
where we have used the trivial bounds $\lambda_{\operatorname{max}}(\trace_A\rho) \leq 1$ and $\lambda_{\operatorname{min}}(\trace_A\rho)\geq 0$. For $p=1-1/N$, the right side of Eq.~(\ref{eqn:cond}) equals $(2N-1)/(N-1)$, which is upper bounded by $3$ for $N\geq 2$, as desired. The second component of $\Phi$, the ``marker'' ancilla, serves to preserve separability in a fashion similar to $\Upsilon$ by ensuring the LOCC operation $\Phi$ can be inverted with non-zero probability (which is achieved by performing a von Neumann measurement on the ancilla in the computational basis). We thus have that $\Phi(\rho)$ is entangled across the $A^\prime/B$ split if and only if $\rho$ is entangled across the $A/B$ split.

Having transformed $\rho$ into the state $\rho^\prime:=\Upsilon(\Phi(\rho))$ which satisfies our requirements for $Q$, we finally proceed to Step 3, in which we plug $\rho^\prime$ into $Q$ and return $Q$'s answer. Thus, the existence of an efficient algorithm $Q$ implies an efficient algorithm for $\Q$, yielding that determining separability of quantum states with a mixed subsystem is NP-hard. By Theorem~\ref{def:EBmaps}, this in turn implies our desired result, that $\EBP$ is NP-hard.

A few remarks are in order. First, to make this argument rigorous, we must use Weak Membership formulations, for which we will need to introduce additional notation. Let $\D:=\set{\rho\in\dens \mid \trace_A(\rho) = I/N}$, or equivalently, the convex set of operators which are the Jamio{\l}kowski representations of channels $\Phi:L(\complex^{N})\mapsto L(\complex^M)$, and let $\EB:=\D\cap \SMN$, or equivalently, the convex subset of EB channels in $\D$ (convexity of $\EB$ is clear from Def.~\ref{def:EBmapsreal}). In a rigorous argument, instead of reducing from $\Q$ to $\EBP$ as done above, one must more accurately reduce $\WMEM$ to $\WMEMEB$ (for the same reason regarding finite precision as in Section~\ref{scn:reduction}). As per Definition~\ref{def:WMEM}, this means we require that relative to $\D$, $\EB$ is be compact, well-bounded and p-centered. To see that $\EB$ is well-bounded and p-centered, observe that it follows directly from the fact that $\SMN$ is contained in a ball of radius $R=\sqrt{(MN-1)/MN}$ and contains a ball centered at $I/MN$ of radius $r=\sqrt{1/MN(MN-1)}$ (see Section~\ref{scn:def}) that $\EB$ must also be contained in a ball of the same radius $R$ and contain a ball\footnote{Note that such a contained ball about $I/MN$ does \emph{not} exist when $\EB$ is embedded into $\dens$ (as opposed to $\D$), since for any $\rho\in\dens$ with $\rho\not\in \D$, and for any $\epsilon > 0$, we have that $(1-\epsilon)I/MN + \epsilon\rho$ is not in $\D$.}~~of the same radius $r$. To show compactness relative to $\D$, we first have by the continuity and linearity of the partial trace that the complement of $\D$ is open, implying $\D$ is closed. Since $\SMN$ is compact and $\EB = \D\cap\SMN$, it follows that $\EB$ is compact relative to $\dens$, since in a metric space the intersection of a closed set and a compact set is compact (Corollary of Theorem 2.35 in~\cite{Rud76}). Finally, since $\EB\subseteq \D\subseteq \dens$, we have that $\EB$ is compact relative to $\D$ (Theorem 2.33 in~\cite{Rud76}). Hence, $\EB$ satisfies our requirements for Weak Membership.

To address the next technical detail, recall that Weak Membership takes as input real vectors, not operators. As discussed above, to ensure that $\EB$ is compact, well-bounded and p-centered, we wish to phrase $\EBP$ relative to $\D$, as opposed to $\dens$. To do so, we must map elements of $\D$ to real vectors in $\reals^{(M^2-1)N^2}$, rather than $\reals^{M^2N^2-1}$, where $(M^2-1)N^2$ and $M^2N^2-1$ are the dimensions of $\D$ and $\dens$, respectively (to see that the dimension of $\D$ is $(M^2-1)N^2$, observe that the partial trace constraint $\trace_A(\rho)=I/N$ effectively places $N^2-1$ linear constraints on $\rho\in\dens$). To achieve this mapping, we use what is sometimes called the Fano form~\cite{F83} for any $\rho\in\dens$:
\begin{equation}
    \rho = \frac{1}{MN}\left[I^A\otimes I^B + \frac{M}{2}\ve{r}^A\cdot\ve{\sigma}^A\otimes{I^B}+
                \frac{N}{2}I^A\otimes\ve{r}^B\cdot\ve{\sigma}^B+\frac{MN}{4}\sum_{i=0}^{M^2-1}\sum_{j=0}^{N^2-1}T_{ij}\sigma^A_i\otimes\sigma^B_j\right].\label{eqn:fano}
\end{equation}
Here, $\ve{r}^A$ and $\ve{\sigma}^A$ denote the real $(M^2-1)$-dimensional Bloch vector for subsystem $A$ and vector of generators for $SU(M)$, respectively, as defined in Eq.~(\ref{eqn:densityToBloch2}), and each real $T_{ij}$ can be defined $T_{ij}:=\trace(\sigma^A_i\otimes\sigma^B_j\rho)$ (this matrix $T$ is sometimes called the \emph{correlation} matrix). The definitions for subsystem $B$ are analogous. Since all $\rho\in\D$ have $\ve{r}^B=\ve{0}$, we can encode any $\rho\in\D$ directly using only the remaining $(M^2-1)N^2$ real values given by $\ve{r}^A$ and $T$, as desired.

Finally, we remark that in order to extend the result here to \emph{strong} NP-hardness of $\EBP$ using the improved hardness results of Theorem~\ref{thm:strong}, one must lower bound the distance $\beta^\prime$ of $\Upsilon(\Phi(\rho))$ from the border of $\EBtwo$, given the promise that $\rho$ is $\beta$ away from the border of $\SMN$ (to be clear, by \emph{distance} here we mean the Frobenius norm of the difference of the Jamio{\l}kowski representations of the maps). In particular, for inverse polynomial $\beta$, we desire inverse polynomial $\beta^\prime$ in order to prove an inverse polynomial hardness gap for $\EBP$. This may prove difficult, as $\Upsilon(\Phi(\rho))$ lives in a different space than $\rho$. We leave this as an open problem.

\section{Concluding Comments}~\label{scn:conclusion}
We have shown that the problem of Weak Membership over the set of separable quantum states $\SMN$ is strongly NP-hard, implying it is NP-hard even if one allows ``moderate'' error, i.e.\ $\beta\leq1/\poly(M,N)$. This shows that it is NP-hard to determine whether an arbitrary quantum state within an inverse polynomial distance from the separable set is entangled. It would be interesting to know whether $\WMEM$ remains NP-hard for $\beta$ a constant. Solving this problem would require a different approach than taken here, as, for example, even the first link of the reduction, $\C\leq_m \R$, introduces an inverse dependence on the dimension. We also remark that the hardness result shown here is via a \emph{Turing} reduction. It remains open, to the best of our knowledge, whether $\WMEM$ can be shown NP-hard under a \emph{many-one} reduction.

We next discussed two applications of our result: (1) We observed immediate lower bounds on the maximum Euclidean distance between a bound entangled state and $\SMN$, and (2) we showed that the problem of determining whether a quantum channel is entanglement-breaking is NP-hard. Whether this latter problem is \emph{strongly} NP-hard is left an an open question, for which we believe Theorem~\ref{thm:finalthm} should prove useful.

\section{Acknowledgements}
We thank the following: Ben Toner for ideas which spurred this line of thought, Richard Cleve and Stephen Vavasis for helpful discussions, John Watrous and Marco Piani for bringing local filters to our attention, and anonymous reviewers for their insightful comments in shaping this work. This work was partially supported by Canada's NSERC, CIAR and MITACS.

\emph{Note:} Shortly after we first posted the results herein, Beigi~\cite{B08} showed a reduction from 2-out-of-4 SAT which can also be used to obtain a result similar to Theorem~\ref{thm:strong}. The question of NP-hardness for constant $\beta$, however, unfortunately remains open.

\appendix
\section{Appendix}\label{app:A}
\begin{lemma}\label{l:appendix1}
    Given density operators $\rho_1, \rho_2\in\dens$, with
    corresponding Bloch vectors $\vec{\alpha}$ and $\vec{\beta}$ given by equations $\rho_1=\frac{I}{MN}+\frac{1}{2}\vec{\alpha}\cdot\vec{\sigma}$ and $\rho_2=\frac{I}{MN}+\frac{1}{2}\vec{\beta}\cdot\vec{\sigma}$ (as per Eq.~(\ref{eqn:densityToBloch})),
    respectively, we have
    $\enorm{\rho_1-\rho_2}=\frac{1}{\sqrt{2}}\enorm{\vec{\alpha}-\vec{\beta}}$.
\end{lemma}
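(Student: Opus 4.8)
The plan is to reduce the claimed identity to a single direct computation of the Frobenius norm via the trace, exploiting the orthonormality relation $\trace(\sigma_i\sigma_j)=2\delta_{ij}$ that was already fixed for the $SU(MN)$ generators in Section~\ref{scn:def}. First I would subtract the two Bloch-form expressions for $\rho_1$ and $\rho_2$. The crucial observation here is that the maximally-mixed component $I/MN$ is identical in both, so it cancels, leaving
\begin{equation}
    \rho_1-\rho_2=\tfrac{1}{2}(\vec{\alpha}-\vec{\beta})\cdot\vec{\sigma}=\tfrac{1}{2}\sum_i(\alpha_i-\beta_i)\sigma_i.
\end{equation}

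Next I would compute $\enorm{\rho_1-\rho_2}^2$ directly. Since $\rho_1-\rho_2$ is Hermitian (each $\sigma_i$ is Hermitian and the coefficients $\alpha_i-\beta_i$ are real), the Frobenius norm squared is simply $\trace\!\big((\rho_1-\rho_2)^2\big)$, with no need to carry a conjugate transpose. Expanding the product and pulling the sums outside the trace gives
\begin{equation}
    \enorm{\rho_1-\rho_2}^2=\tfrac{1}{4}\sum_{i,j}(\alpha_i-\beta_i)(\alpha_j-\beta_j)\,\trace(\sigma_i\sigma_j).
\end{equation}
Substituting $\trace(\sigma_i\sigma_j)=2\delta_{ij}$ collapses the double sum to a single one, yielding $\tfrac{1}{2}\sum_i(\alpha_i-\beta_i)^2=\tfrac{1}{2}\enorm{\vec{\alpha}-\vec{\beta}}^2$, and taking square roots gives the stated result $\enorm{\rho_1-\rho_2}=\tfrac{1}{\sqrt{2}}\enorm{\vec{\alpha}-\vec{\beta}}$.

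Honestly, there is no serious obstacle in this lemma; it is a routine orthonormality calculation. The only points that require a moment of care are recognizing that the identity terms cancel (so that the norm depends only on $\vec{\alpha}-\vec{\beta}$ and not on the offsets), using Hermiticity to identify $\enorm{\cdot}^2$ with $\trace\!\big((\cdot)^2\big)$, and tracking the constants so that the factor $2$ in $\trace(\sigma_i\sigma_j)=2\delta_{ij}$ combines with the $\tfrac14$ prefactor to produce exactly the $\tfrac{1}{\sqrt{2}}$ — this is precisely the ``extra factor of $\sqrt{2}$'' alluded to in Section~\ref{scn:def} when converting between the density-matrix and Bloch-vector representations.
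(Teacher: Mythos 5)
Your proposal is correct and follows essentially the same route as the paper's proof: cancel the common $I/MN$ terms, expand the Frobenius norm as a trace of the product of the difference with itself, and collapse the double sum via $\trace(\sigma_i\sigma_j)=2\delta_{ij}$ to obtain the factor $\tfrac{1}{\sqrt{2}}$. The only cosmetic difference is that the paper carries the conjugate transpose explicitly in the trace formula, whereas you invoke Hermiticity to drop it; the computation is otherwise identical.
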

\begin{proof}
    Via straightforward manipulation and the fact that $\trace(\sigma_i\sigma_j)=2\delta_{ij}$, we have:
    \begin{eqnarray*}
        \enorm{\rho_1-\rho_2}&=&
                           \frac{1}{2}\enorm{\sum_{i=1}^{M^2N^2-1}(\alpha_i-\beta_i)\sigma_i}\\
                           &=&\frac{1}{2}\sqrt{\trace\left[\left(\sum_{i=1}^{M^2N^2-1}(\alpha_i-\beta_i)\sigma_i\right)^\adjoint\left(\sum_{j=1}^{M^2N^2-1}(\alpha_j-\beta_j)\sigma_j\right)\right]}\\
                           &=&\frac{1}{2}\sqrt{\sum_{i=1,j=1}^{M^2N^2-1}(\alpha_i-\beta_i)(\alpha_j-\beta_j)2\delta_{ij}}\\
                           &=&\frac{1}{\sqrt{2}}||\vec{\alpha}-\vec{\beta}||_2.
    \end{eqnarray*}
\end{proof}

\begin{lemma}\label{l:appendix2}
    Combining Theorem~\ref{thm:link1} and Lemma~\ref{l:link2} gives $\ec\in O(\sqrt{N})$.
\end{lemma}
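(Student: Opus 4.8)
The plan is to avoid the lossy Cauchy--Schwarz estimate $\ec\in O(m^{1/2}\Delta)$ recorded in Lemma~\ref{l:link2} (which, with $m=M^2N^2-1\in\Theta(n^8)$, is far too weak to yield $O(\sqrt N)$) and instead compute $\ec$ \emph{exactly} in terms of $\enorm{C}=\Delta$, the norm established in the proof of Lemma~\ref{l:link2}. The key observation is that, by definition, $\hat c_i=\tfrac12\trace(C\sigma_i)$ is precisely (half) the $i$th coefficient of $C$ in the generator basis. First I would expand $C=c_0 I+\sum_i d_i\sigma_i$; since the $\sigma_i$ are traceless with $\trace(\sigma_i\sigma_j)=2\delta_{ij}$, orthogonality gives $d_i=\tfrac12\trace(C\sigma_i)=\hat c_i$, while $\trace(C)=0$ forces $c_0=\trace(C)/(MN)=0$. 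Hence $C=\sum_i\hat c_i\sigma_i$.

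Next I would compute the Frobenius norm directly. Because $C$ is real symmetric, $\enorm{C}^2=\trace(C^2)=\sum_{i,j}\hat c_i\hat c_j\,\trace(\sigma_i\sigma_j)=2\sum_i\hat c_i^2=2\ec^2$, so that $\ec=\enorm{C}/\sqrt2=\Delta/\sqrt2$. This is simply the content of Lemma~\ref{l:appendix1} applied to the traceless operator $C$ in place of $\rho_1-\rho_2$, and it replaces the crude bound by an exact identity that discards the spurious $m^{1/2}$ factor entirely.

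Finally I would substitute the parameters supplied by Theorem~\ref{thm:link1}. With $\enorm{B_i}\in\Theta(1)$ and $k=n(n-1)/2$, we get $\Delta=\sqrt{2\sum_{i=1}^k\enorm{B_i}^2}\in\Theta(\sqrt{k})=\Theta(n)$, and since Lemma~\ref{l:link2} sets $N=\tfrac{l(l-1)}{2}+1=\tfrac{n(n-1)}{2}+1\in\Theta(n^2)$, we have $\sqrt N\in\Theta(n)$. Therefore $\ec=\Delta/\sqrt2\in\Theta(n)=\Theta(\sqrt N)\subseteq O(\sqrt N)$, as claimed.

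I expect the only genuine obstacle to be the conceptual step in the first two paragraphs: recognizing that the general-purpose Cauchy--Schwarz bound of Lemma~\ref{l:link2} must be bypassed, and that the orthonormality relation $\trace(\sigma_i\sigma_j)=2\delta_{ij}$ pins $\ec$ down exactly as $\Delta/\sqrt2$. Once this exact relation is in hand, the asymptotic plug-in of the $\Theta(1)$ norms $\enorm{B_i}$ and the count $k=n(n-1)/2$ is entirely routine.
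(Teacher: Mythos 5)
Your proof is correct, but it reaches the bound by a genuinely different route than the paper. The paper's proof works with the explicit construction of the $SU(MN)$ generators $\set{U_{pq},V_{pq},W_r}$: it argues that by the symmetry of $C$ only the $U_{pq}$ terms contribute to $\sum_i \trace(C\sigma_i)^2$, counts the four entries of $1$ that each edge contributes to $C$, and arrives at the exact graph-theoretic identity $\ec=\sqrt{2\hat{e}}$ for $\hat{e}$ the number of edges, whence $\hat e\in O(n^2)$ and $N\in\Theta(n^2)$ give $O(\sqrt N)$. You instead avoid the generator construction entirely: the orthogonality relation $\trace(\sigma_i\sigma_j)=2\delta_{ij}$ together with $\trace(C)=0$ gives $C=\sum_i \hat c_i\sigma_i$, and then the Parseval-type identity $\enorm{C}^2=2\ec^2$ yields $\ec=\Delta/\sqrt2$, reusing the fact $\enorm{C}=\Delta$ already established in the proof of Lemma~\ref{l:link2}. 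The two exact identities agree (each edge contributes $\enorm{B_i}^2=2$, so $\Delta/\sqrt2=\sqrt{2\hat e}$), and your diagnosis that the Cauchy--Schwarz factor $m^{1/2}\in\Theta(n^4)$ is the obstruction to be bypassed is exactly right. What your route buys is brevity and basis-independence --- it is, as you note, just Lemma~\ref{l:appendix1} applied to a traceless Hermitian operator --- while the paper's route buys the sharper edge-counting form $\sqrt{2\hat e}$, which is smaller than $\Theta(n)$ for sparse graphs. One small caveat: your closing claim $\ec\in\Theta(n)$ takes Theorem~\ref{thm:link1}'s statement $\enorm{B_i}\in\Theta(1)$ at face value, whereas the structure of the $B_i$ described after that theorem shows $B_i=0$ for non-edges, so only $\enorm{B_i}\in O(1)$ is guaranteed; since the lemma asserts only the upper bound $O(\sqrt N)$, this does not affect your conclusion.
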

\begin{proof}
    By definition, we have:
    \begin{equation}\label{eqn:cnorm}
        \ec=\sqrt{\sum_{i=1}^{m}\left[\frac{1}{2}\trace(C\sigma_i)\right]^2},
    \end{equation}
    where $m=M^2N^2-1$. Recall now the definition of $C$ from Eq.~(\ref{eqn:Cmatrix}), where in our case, each $A_i\in\reals^{N\times N}$ is all zeroes except for its upper left corner, which is set to submatrix $B_i\in\reals^{l\times l}$ from Theorem~\ref{thm:link1}. Each $B_i$ in turn is all zeroes, except for some index $(k,l)$ (and hence $(l,k)$, by symmetry), $1\leq k < l \leq n$, which is set to the $(k,l)$th entry of the adjacency matrix $A_G$ of graph $G$ (see Theorem~\ref{thm:link1} and ensuing discussion). We also require an explicit construction for the generators $\sigma_i$ of $SU(MN)$, given for example in~\cite{Kim03}, where $\set{\sigma_i}_{i=1}^{M^2N^2-1}=\set{U_{pq},V_{pq},W_{r}}$, such that for $1\leq p<q\leq MN$ and $1\leq r \leq MN-1$, and $\set{\ve{x}_i}_{i=1}^{MN}$ an orthonormal basis for Hilbert space $\mathcal{H}^{MN}$:
    \begin{eqnarray}
        U_{pq}&=&\ve{x}_p\ve{x}_q^\adjoint+\ve{x}_q\ve{x}_p^\adjoint\\
        V_{pq}&=&-i\ve{x}_p\ve{x}_q^\adjoint+i\ve{x}_q\ve{x}_p^\adjoint\\
        W_{r} &=&\sqrt{\frac{2}{r(r+1)}}\left[\left(\sum_{k=1}^{r}\ve{x}_k\ve{x}_k^\adjoint\right)-r\ve{x}_{r+1}\ve{x}_{r+1}^\adjoint\right].
    \end{eqnarray}

    Due to the symmetry of $C$ and the fact that $\trace(C)=0$, it is clear that only the generators of the form $U_{pq}$ will contribute to the sum in Eq.~(\ref{eqn:cnorm}). Further, for each edge in $G$, $\trace(CU_{pq})=2$ for each $U_{pq}$ whose non-zero indices match those of the entries in $C$ corresponding to that edge. Since each edge contributes four (symmetrically placed) entries of $1$ to $C$, we hence have $\ec=\frac{1}{2}\sqrt{(2\hat{e})2^2}=\sqrt{2\hat{e}}$, where $\hat{e}$ denotes the number of edges in $G$. Since $\hat{e}\in O(n^2)$ ($n$ the number of vertices in $G$), and $N\in \Theta(n^2)$, we have $\ec\in O(\sqrt{N})$, as required.
\end{proof}
\bibliographystyle{unsrt}
\bibliography{BibSep}

\begin{thebibliography}{10}

\bibitem{S35}
E.~Schr\"{o}dinger.
\newblock Discussion of probability relations between separated systems.
\newblock {\em Proc. Cambridge Phil. Soc.}, 31:555--563, 1935.

\bibitem{BBCJPW93}
C.~Bennett, G.~Brassard, C.~Cr\'{e}peau, R.~Josza, A.~Peres, and W.~K.
  Wootters.
\newblock Teleporting an unknown quantum state via dual classical and
  {E}instein-{P}odolsky-{R}osen channels.
\newblock {\em Phys. Rev. Lett.}, 70:1895---1899, 1993.

\bibitem{BW92}
C.~H. Bennett and S.~J. Wiesner.
\newblock Communication via one- and two-particle operators on
  {E}instein-{P}odolsky-{R}osen states.
\newblock {\em Phys. Rev. Lett.}, 69:2881--2884, 1992.

\bibitem{S94}
P.~Shor.
\newblock Algorithms for quantum computation: discrete logarithms and
  factoring.
\newblock In {\em Proc. of 35th Ann. Symp. on Found. of Comp. Sci.}, pages
  124--134, 1994.

\bibitem{CB97}
R.~Cleve and H.~Buhrman.
\newblock Substituting quantum entanglement for communication.
\newblock {\em Phys. Rev. A}, 56:1201--1204, 1997.

\bibitem{G97}
L.~K. Grover.
\newblock Quantum telecomputation.
\newblock arXiv:quant-ph/9704012v2, 1997.

\bibitem{E91}
A.~Ekert.
\newblock Quantum cryptography based on {B}ell's theorem.
\newblock {\em Phys. Rev. Lett.}, 67:661--663, 1991.

\bibitem{P96}
A.~Peres.
\newblock Separability criterion for density matrices.
\newblock {\em Phys. Rev. Lett.}, 77(8), 1996.

\bibitem{HHH96}
M.~Horodecki, P.~Horodecki, and R.~Horodecki.
\newblock Separability of mixed states: necessary and sufficient conditions.
\newblock {\em Phys. Lett. A}, 223(1):1--8, 1996.

\bibitem{DHR02}
M.~J. Donald, M.~Horodecki, and O.~Rudolph.
\newblock The uniqueness theorem for entanglement measures.
\newblock {\em J. Math. Phys.}, 43:4252--4272, 2002.

\bibitem{BDSW96}
C.~H. Bennett, D.~P. DiVincezo, J.~Smolin, and W.~K. Wootters.
\newblock Mixed-state entanglement and quantum error correction.
\newblock {\em Phys. Rev. A}, 54:3824, 1996.

\bibitem{VPRK97}
V.~Vedral, M.~B. Plenio, M.~A. Rippin, and P.~L. Knight.
\newblock Quantifying entanglement.
\newblock {\em Phys.~Rev.~Lett.}, 78:2275, 1997.

\bibitem{VW02}
G.~Vidal and R.~F. Werner.
\newblock Computable measure of entanglement.
\newblock {\em Phys. Rev. A}, 65:032314, 2002.

\bibitem{Bru02}
D.~Bru\ss.
\newblock Characterizing entanglement.
\newblock {\em J. of Math. Phys.}, 43(9), 2002.

\bibitem{HHH07}
R.~Horodecki, P.~Horodecki, M.~Horodecki, and K.~Horodecki.
\newblock Quantum entanglement.
\newblock arXiv:quant-ph/0702225v2, 2007.

\bibitem{BZ06}
I.~Bengtsson and K.~\.{Z}yczkowski.
\newblock {\em Geometry of Quantum States: An Introduction to Quantum
  Entanglement}.
\newblock Cambridge University Press, 2006.

\bibitem{Gur03}
L.~Gurvits.
\newblock Classical deterministic complexity of edmonds' problem and quantum
  entanglement.
\newblock In {\em Proc. of the 35th ACM symp. on Theory of comp.}, pages
  10--19, New York, 2003. ACM Press.

\bibitem{Kim03}
G.~Kimura.
\newblock The {B}loch vector for {N}-level systems.
\newblock {\em Phys. Lett. A}, 314(5):339--349, 2003.

\bibitem{GJ79}
M.~R. Garey and D.~S.Johnson.
\newblock {\em Computers and Intractability: A Guide to the theory of
  NP-completeness}.
\newblock W.~H.~Freeman and Company, New York, 1979.

\bibitem{Iou07}
L.~Ioannou.
\newblock Computational complexity of the quantum separability problem.
\newblock {\em Quant. Inf. and Comp.}, 7(4):335, 2007.

\bibitem{GO09}
L.~Gurvits and A.~Olshevsky.
\newblock On the {NP}-hardness of checking matrix polytope stability and
  continuous-time switching stability.
\newblock {\em IEEE Trans. Auto. Control}, 54:337--341, 2009.

\bibitem{Gro88}
M.~Gr\"{o}tschel, L.~Lov\'{a}sz, and A.~Schrijver.
\newblock {\em Geometric Algorithms and Combinatorial Optimization}.
\newblock Springer-Verlag, Berlin, 1988.

\bibitem{YN76}
D.~B. Yudin and A.~S. Nemirovskii.
\newblock Informational complexity and efficient methods for the solution of
  convex extremal problems (in {R}ussian).
\newblock {\em Ekonomica i Matematicheskie Metody}, 12:357–--369, 1976.

\bibitem{Liu07}
Y.~K. Liu.
\newblock The complexity of the consistency and {N}-representability problems
  for quantum states.
\newblock arXiv.org:0712.3041v1, 2007.
\newblock PhD Thesis.

\bibitem{Ber04}
D.~Bertsimas and S.~Vempala.
\newblock Solving convex programs by random walks.
\newblock {\em J. of the ACM}, 51(4), 2004.

\bibitem{HHH98}
M.~Horodecki, P.~Horodecki, and R.~Horodecki.
\newblock Mixed-state entanglement and distillation: Is there a ``bound''
  entanglement in nature?
\newblock {\em Phys. Rev. Lett.}, 80(24):5239--5242, 1998.

\bibitem{J72}
A.~Jamio{\l}kowski.
\newblock Linear transformations which preserve trace and positive
  semi-definiteness of operators.
\newblock {\em Rep.~Math.~Phys.}, 3:275, 1972.

\bibitem{HSR03}
M.~Horodecki, P.~W. Shor, and M.~B. Ruskai.
\newblock General entanglement breaking channels.
\newblock {\em Rev. Math. Phys.}, 15:629, 2003.

\bibitem{Gur02}
L.~Gurvits and H.~Barnum.
\newblock Largest separable balls around the maximally mixed bipartite quantum
  state.
\newblock {\em Phys. Rev. A}, 66(6):062311, 2002.

\bibitem{V89}
E.~B. Vinberg.
\newblock {\em Linear Representations of Groups}.
\newblock Birkh\"{a}user Basel, 1989.

\bibitem{MS65}
T.~S. Motzkin and E.~G. Straus.
\newblock Maxima for graphs and a new proof of a theorem of {T}\'{u}ran.
\newblock {\em Canadian J. Math.}, 17:533–--540, 1965.

\bibitem{Ruskai03}
M.~B. Ruskai.
\newblock Qubit entanglement breaking channels.
\newblock {\em Rev. Math. Phys.}, 15:643, 2003.

\bibitem{King02}
C.~King.
\newblock Maximization of capacity and lp norms for some product channels.
\newblock {\em J. Math. Phys.}, 43:1247, 2002.

\bibitem{Shor02}
P.~Shor.
\newblock Additivity of the classical capacity of entanglement-breaking quantum
  channels.
\newblock {\em J. Math. Phys.}, 43:4334, 2002.

\bibitem{King03}
C.~King.
\newblock Maximal p-norms of entanglement breaking channels.
\newblock {\em Quant. Inf. and Comp.}, 3:186, 2003.

\bibitem{HolevoSW05}
A.~S. Holevo, M.~E. Shirokov, and R.~F. Werner.
\newblock Separability and entanglement-breaking in infinite dimensions.
\newblock {\em Russian. Math Surveys}, 60, 2005.

\bibitem{Holevo08}
A.~S. Holevo.
\newblock Entanglement-breaking channels in infinite dimensions.
\newblock {\em Problems Inf. Transmiss.}, 44:171--184, 2008.

\bibitem{C75}
M.-D. Choi.
\newblock Completely positive linear maps on complex matrices.
\newblock {\em Linear Alg.~Appl.}, 10:285, 1975.

\bibitem{G96}
N.~Gisin.
\newblock Hidden quantum nonlocality revealed by local filters.
\newblock {\em Phys. Lett. A}, 210(3):151--156(6), 1996.

\bibitem{Hog07}
L.~Hogben (editor).
\newblock {\em Handbook of Linear Algebra}.
\newblock Chapman \& Hall/CRC, 2007.

\bibitem{Rud76}
W.~Rudin.
\newblock {\em Principles of Mathematical Analysis}.
\newblock McGraw-Hill, 1976.

\bibitem{F83}
U.~Fano.
\newblock Pairs of two-level systems.
\newblock {\em Rev. Mod. Phys.}, 55:855--874, 1983.

\bibitem{B08}
S.~Beigi.
\newblock {NP} vs {QMA\_log(2)}.
\newblock arXiv:0810.5109v1, 2008.

\end{thebibliography}

\end{document}